\titleformat{\chapter}[display]
{\bfseries\huge}
{\filleft\Large\chaptertitlename~\thechapter}
{3ex}
{\titlerule\vspace{1.5ex}\filright}
[\vspace{1ex}\titlerule]
\tikzset{ball/.style={circle, draw, fill=black,inner sep=0pt, minimum width=4pt}}
\tikzset{nd/.style={inner sep=1pt}}
\tikzset{CRS/.style={circle, draw,inner sep=1pt, minimum width=8pt}}
\pgfplotsset{compat = newest}
\tikzset{>=Latex}
\tikzset{
  set arrow inside/.code={\pgfqkeys{/tikz/arrow inside}{#1}},
  set arrow inside={end/.initial=>, opt/.initial=},
  /pgf/decoration/Mark/.style={
    mark/.expanded=at position #1 with
    {
      \noexpand\arrow[\pgfkeysvalueof{/tikz/arrow inside/opt}]{\pgfkeysvalueof{/tikz/arrow inside/end}}
    }
  },
  arrow inside/.style 2 args={
    set arrow inside={#1},
    postaction={
      decorate,decoration={
        markings,Mark/.list={#2}
      }
    }
  },
}
\theoremstyle{plain}
\newtheorem{thm}{Theorem}[section] 
\theoremstyle{definition}
\newtheorem{defn}[thm]{Definition} 
\newtheorem{exmp}[thm]{Example} 
\newtheorem{lem}[thm]{Lemma}
\newtheorem{corol}[thm]{Corollary}
\newcommand{\real}{\mathbb{R}}
\newcommand{\arc}[3][]{\begin{tikzcd} #2 \ar[r,-Latex,"#1"] \pgfmatrixnextcell #3 \end{tikzcd}}
\newcommand{\edge}[3][]{\begin{tikzcd} #2 \ar[r,dash,"#1"] \pgfmatrixnextcell #3 \end{tikzcd}}
\DeclareMathOperator{\pathweight}{pathweight}
\title{The graph structure of two-player games}
\author[1]{Oliver Biggar}
\author[1]{Iman Shames}
\affil[1]{CIICADA Lab, Australian National University, Canberra, 2601, Australia}
\affil[*]{oliver.biggar@anu.edu.au}
\begin{abstract}
In this paper we analyse two-player games by their \emph{response graphs}. The response graph has nodes which are strategy profiles, with an arc between profiles if they differ in the strategy of a single player, with the direction of the arc indicating the preferred option for that player. Response graphs, and particularly their \emph{sink strongly connected components}, play an important role in modern techniques in evolutionary game theory and multi-agent learning. We show that the response graph is a simple and well-motivated model of strategic interaction which captures many non-trivial properties of a game, despite not depending on cardinal payoffs. We characterise the games which share a response graph with a zero-sum or potential game respectively, and demonstrate a duality between these sets. This allows us to understand the influence of these properties on the response graph. The response graphs of Matching Pennies and Coordination are shown to play a key role in all two-player games: every non-iteratively-dominated strategy takes part in a subgame with these graph structures. As a corollary, any game sharing a response graph with both a zero-sum game and potential game must be dominance-solvable. Finally, we demonstrate our results on some larger games.
\end{abstract}
\begin{document}

\flushbottom
\maketitle
\thispagestyle{empty}

\section{Introduction}

One of the most fundamental questions in game theory is that of \emph{representing preference}\cite{pareto1919manuale,von2007theory,rasmusen1989games,myerson1997game}: how should we model the preferences of players over their strategies? The established solution, originating in Von Neumann and Morgenstern's axiomatisation of \emph{utility}\cite{von2007theory}, is to assign to each player a \emph{real-valued payoff}, for each combination of strategies. Soon afterwards, John Nash invented his eponymous equilibrium concept\cite{nash1951non}, which he proved exists in every game modelled by Von Neumann-Morgenstern utility. This elegant result established the Nash equilibrium as a clear choice of the outcome of a game. Importantly, these two concepts are \emph{mutually reinforcing}: Von Neumann-Morgenstern utility lays the mathematical foundation to prove the existence of Nash equilibria, and the existence of Nash equilibria retrospectively justifies the choice of the Von Neumann-Morgenstern model. Together, this began a flurry of game-theoretic research which cemented both Von Neumann-Morgenstern utility and the Nash equilibrium as central notions in economic thought\cite{myerson1997game}.

Unfortunately, many games do not have obvious choices of utility values. Because of this, other game models---such as \emph{ordinal games}\cite{mertens2004ordinality,durieu2008ordinal,cruz2000ordinal}---have persisted as alternatives which make weaker assumptions on what we, as modellers, must know about a strategic interaction we intend to analyse. But these models have been hindered by the dominance of the Nash equilibrium in the game theory literature\cite{myerson1997game}; without a solution concept as clear and compelling as the Nash equilibrium, such models have been unable to overtake the prevailing Von Neumann-Morgenstern approach.

However, as game theory has grown to be a significant tool in biology\cite{smith1973logic}, computer science\cite{roughgarden2010algorithmic} and multi-agent learning\cite{shoham2008multiagent} the Nash equilibrium has been found to be a less compelling solution concept than was once thought. The first argument comes from computational complexity: Nash equilibria are intractable to compute from the description of the game\cite{daskalakis2009complexity}, even in two-player games\cite{chen2006settling}. Neither we, the analysts, nor the players themselves, can feasibly compute Nash equilibrium strategies. The second argument comes from \emph{evolutionary game theory}, the subfield containing population dynamics and learning\cite{sandholm2010population}. A series of results have established that evolution or learning rules do not\cite{sandholm2010population,piliouras2014optimization,papadimitriou2016nash,vlatakis2020no} and generally \emph{cannot}\cite{benaim2012perturbations,hart2003uncoupled} converge to Nash equilibria. Instead, non-equilibrium behaviour is the rule rather than the exception, giving the Nash equilibrium relatively little predictive value~\cite{papadimitriou2019game,omidshafiei2019alpha,piliouras2014optimization,kleinberg2011beyond,cheung2019vortices}.

Recently, great advances have been made in AI\cite{silver2016mastering,silver2017mastering,silver2018general}, using ideas from multi-agent learning\cite{hernandez2019survey,yang2020overview}. If the Nash equilibrium is not a satisfactory notion of outcome in these fields, we are motivated to seek new approaches to evolutionary game theory which can explain and shed light on learning\cite{kleinberg2011beyond,papadimitriou2016nash,papadimitriou2019game,omidshafiei2019alpha}. In particular, if our model no longer requires the Nash equilibrium, we are free to consider models which do not use Von Neumann-Morgenstern utility.

Our approach is based on \emph{Occam's razor}, the principle that the simplest model capable of describing the concept of interest is often the best. \emph{We find a solution that is both computable and aligns with the outcome of evolutionary processes by simplifying our model of player preferences}.

\begin{table}
    \centering
    \begin{game}{3}{3}[Player~1][Player~2]
         & Rock & Paper & Scissors \\
         Rock & $0,0$ & $-1,1$ & $1,-1$ \\
         Paper & $1,-1$ & $0,0$ & $-1,1$ \\
         Scissors & $-1,1$ & $1,-1$ & $0,0$ 
    \end{game}
    \caption{A standard presentation of the Rock-Paper-Scissors game.\cite{sandholm2010population}}
    \label{tab:rps}
\end{table}

\begin{figure}
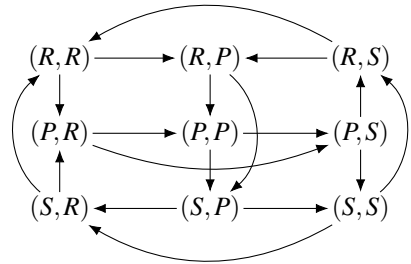

    \centering
    \includestandalone{figs/3x3/RPS_labeled}
    \caption{The response graph of Rock-Paper-Scissors.}
    \label{fig:rps table graph}
\end{figure}
\begin{figure}
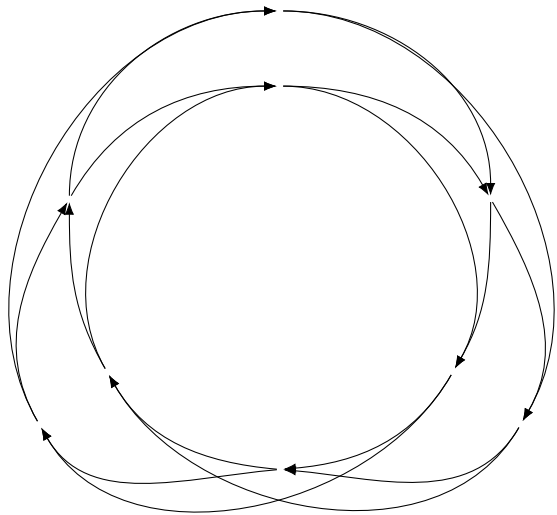

    \centering
    \includestandalone{figs/3x3/RPS}
    \caption{An alternate presentation of the response graph of Rock-Paper-Scissors, emphasising its M\"obius strip structure.}
    \label{fig:rps round}
\end{figure}

We begin with a concrete example of a game: Rock-Paper-Scissors. In this game, each player simultaneously chooses one of `Rock', `Paper' or `Scissors', where Rock defeats Scissors, Scissors defeats Paper, Paper defeats Rock, and playing the same option yields a tie. If we were explaining the game to another person, this description (along with the standard assumption that winning is preferred to a tie which is itself preferred to losing) is sufficient. Intuition tells us this description should also be sufficient to analyse the game mathematically. Yet, the payoffs have not been specified---this is not a game in the sense of game theory\cite{myerson1997game}. Two "Rock-Paper-Scissors" games obeying these constraints and yet differing in the payoff values are different games. We have specified only the \emph{preference order} over each player's strategies, given a fixed choice of strategy for the other player. For instance, if Player~1 plays Rock, we know that Player~2 prefers their options in the best-to-worst order: Paper, Rock, Scissors. This is the underlying structure of Rock-Paper-Scissors---if I reward the winner of the game with \$2 instead of \$1, it should not become a different game! Indeed any $3\times 3$ game with these preference orders, even without payoffs specified, will generally be referred to as "Rock-Paper-Scissors". For the same reason, Rock-Paper-Scissors is often \cite{sandholm2010population} presented with the "default" 1, 0, -1 payoffs, as in Table~\ref{tab:rps}. These payoffs are serving only to instantiate the preference orders. So, can we cut out the intermediary, and define a game by the preference orders alone?

These preference orders are captured precisely in an object called the \emph{response graph} of a game\cite{candogan2011flows,papadimitriou2019game}. The nodes of this graph are the strategy profiles, and there is an arc between profiles if they differ in the strategy of a single player, with arcs directed toward the preferred profile for that player. The response graph of Rock-Paper-Scissors is shown in Figure~\ref{fig:rps table graph} and again in Figure~\ref{fig:rps round}, with the latter emphasising the symmetric cycle structure. We can present Figure~\ref{fig:rps round} without labelling the nodes by profiles---the profiles can be reconstructed from the graph in linear time up to renaming (Theorem~\ref{reconstruction}), so the response graph implicitly handles the problem of renaming players and strategies. Importantly for our purposes, response graphs play an important part in modern developments in machine learning and evolutionary game theory~\cite{papadimitriou2019game,papadimitriou2018nash,omidshafiei2019alpha,kleinberg2011beyond,papadimitriou2016nash}. A key concept is the \emph{sink strongly connected components} of the response graph (which we shall shorten to \emph{sink components}), which are a solution concept generalising pure Nash equilibria. Recent work\cite{biggarthesis} has shown that under the replicator dynamic, a common choice of evolutionary dynamic\cite{sandholm2010population}, the sink components are contained in \emph{sink chain components}, a topological concept which emerges from the Fundamental Theorem of Dynamical Systems\cite{conley1978isolated}. Sink chain components represent the `long-run' outcome of a dynamic process such as learning or evolution on a game. This result gives a compelling motivation for sink components as a dynamic---and unlike Nash equilibria, \emph{predictive}---solution concept for games. They are also tractable to compute\cite{papadimitriou2019game,omidshafiei2019alpha}. Building on these ideas, Omidshafiei et al\cite{omidshafiei2019alpha} present a new approach called \emph{$\alpha$-rank} for ranking the strength of agents in multi-agent settings using the response graph and sink components. When applied to a `biased' Rock-Paper-Scissors game with differing payoffs in different profiles the authors find that $\alpha$-rank still gives an equal ranking to each strategy `Rock', `Paper' and `Scissors'\cite{omidshafiei2019alpha}, suggesting the long-run strength of these strategies is a property of the response graph. A variant of the response graph also exists, called the \emph{weighted response graph}, where arcs are weighted by the difference in payoff for the associated player. Weighted response graphs provide a mechanism to decompose games\cite{candogan2011flows,candogan2013dynamics,hwang2020strategic} up to \emph{strategic equivalence}.  More recently, the spectrum\cite{diestel2016graph} of the response graph has been used to describe the topological landscape of multiplayer games\cite{omidshafiei2020navigating} for the purposes of analysing and comparing games. Sink components have also been used\cite{goemans2005sink} as an alternative measure of the Price of Anarchy\cite{roughgarden2005selfish}.

The response graph is defined by the preference orders; it does not depend closely on payoffs. If two games have different payoffs but the preference order for any given player is equal for any fixed choice of strategies for the other players, the response graphs are the same. The response graph is more general than an \emph{ordinal game}\cite{mertens2004ordinality,durieu2008ordinal,barany1992classification}---in that model, two games are \emph{ordinal-equivalent} if each player's order over \emph{all} profiles is the same. In the Rock-Paper-Scissors example, this would require modelling whether Player~1 prefers winning in the profile (Rock, Scissors) to winning in the profile (Scissors, Paper), even though they can never unilaterally choose between these profiles! The notion of \emph{strategic equivalence}\cite{morris2004best,candogan2011flows,hwang2020strategic} takes this into account, defining two games to be \emph{strategically-equivalent} if the relative payoff difference between \emph{comparable} profiles---those differing in only one player---is equal. Strategic equivalence is captured by the \emph{weighted} response graph; it is motivated by the fact that the Nash equilibrium is invariant under strategic equivalence \cite{candogan2011flows,candogan2013dynamics}. In fact, strategic equivalence is defined by the preference orders over all \emph{mixed} profiles~\cite{moulin1978strategically}. Unlike ordinal equivalence, strategic equivalence \emph{does} depend on the cardinal value of payoffs. The (unweighted) response graph combines the strengths of both equivalences, generalising ordinal and strategic games into a simple and well-motivated model capturing the `underlying structure' of a game\cite{biggarthesis}.

The response graph is a simple and general model. If it is to be a \emph{good} model, by Occam's razor, it must also be capable of describing non-trivial properties of a game. \textbf{That is the goal of this paper}: to establish that, despite their generality and combinatorial nature, response graphs capture important and non-trivial game-theoretic properties which extend the existing theory of two-player games. Though we focus on two-player games, we expect that the response graph approach will be equally applicable for general games. This line of inquiry allows us to conceptually separate those properties of a game which are defined by the payoffs from those which are defined by the preferences alone, and so provide a better-informed theory in applications where we cannot reliably model real-valued payoffs. Recalling the connection\cite{papadimitriou2019game,omidshafiei2019alpha,biggarthesis} between the sink components and the long-run outcome of the replicator dynamics on a game, we find that investigating the sink components, as we do in this paper, sheds light on evolution and learning.

\subsection{Contributions}

In this paper we study the response graphs of two-player games. To isolate the influence of the response graph, we study games \emph{modulo isomorphism of response graphs}. That is, we say two games are \emph{preference-equivalent} if their response graphs are isomorphic. It is easy to see that pure Nash equilibria and strict iterated dominance of pure strategies are properties which are invariant under this equivalence relation. We define the \emph{preference-zero-sum} and \emph{preference-potential} game as those games which are preference-equivalent to either a zero-sum game\cite{von2007theory} or potential game\cite{monderer1996potential} respectively. These classes of games are particularly important in game dynamics\cite{sandholm2010population}, and so understanding their sink components is a natural question of interest. Zero-sum two-player games particularly are one of the most well-studied classes of games\cite{von2007theory}, and their definition depends crucially on payoffs. Despite this, we find that key properties of zero-sum games extend to the much broader set of preference-zero-sum games, showing that being zero-sum is to some degree a graph property. While the preference-potential games are known to be those with \emph{acyclic} response graphs~\cite{durieu2008ordinal}, the preference-zero-sum games have---to our knowledge---never been characterised.
We prove that a two-player game is preference-zero-sum if and only if it is acyclic after \emph{reflection}, which is a reversal of preferences for one player (Corollary~\ref{corol: preference duality}). Thus the graph property underlying the zero-sum property is acyclicity. We find that the existence of pure Nash equilibria in potential games extends to preference-potential games, and the uniqueness of Nash equilibria in generic zero-sum games translates to uniqueness of the sink component in generic preference-zero-sum games (Lemma~\ref{one sink}).

The Matching Pennies and Coordination\cite{sandholm2010population} $2\times 2$ games respectively form the prototypical examples of zero-sum and potential games. Their response graphs (Figure~\ref{fig:p MP} and~\ref{fig:p CO}) are the 4-cycle and the \emph{reflected} 4-cycle (Definition~\ref{def: reflection}). Remarkably, we find that these two graphs play a \emph{fundamental role} in bringing strategic complexity to \emph{all two-player games}. First, any two-player game which has multiple sink components must contain the response graph of CO as an induced subgraph (Theorem~\ref{one sink}). Second, in any two-player game, every non-iteratively-dominated strategy takes part in a $2\times 2$ subgame whose graph is that of Matching Pennies or Coordination (Theorem~\ref{ CO and MP theorem}). As a consequence we find a new game theory result: if all $2\times 2$ subgames of a two-player game have a dominated strategy, then the game is dominance-solvable. Combining this result with the previous characterisations, we obtain the surprising corollary that any two-player game both preference-zero-sum and preference-potential must be dominance-solvable (Corollary~\ref{dominance theorem}). These results are far-reaching, because the classes of preference-zero-sum and preference-potential games are very broad; every $2\times 2$ game is either preference-zero-sum, preference-potential, or both, in which case it is dominance-solvable. Even among $2\times 3$ games, there is \emph{only one} generic response graph which is neither preference-zero-sum nor preference-potential (Figure~\ref{fig:2x3 MPCO}).

Finally, we demonstrate our results by exploring $2\times 3$, $2\times 4$ and $3\times 3$ response graphs. We show how the techniques of the paper allow us to reason about such games easily, and we construct a stock of examples with interesting properties. For instance, we construct a generic $3\times 3$ preference-zero-sum game with a pure Nash equilibrium (Figure~\ref{fig:inner}) and show that it is the unique response graph with these properties.

The proofs can be found in the Supplementary Material.

\section{Preliminaries} \label{sec: preliminaries}

A \emph{graph}\cite{bang2008digraphs} is a pair $G = (N,A)$, where $N$ is a finite set of \emph{nodes} and $A\subseteq N\times N$ is a finite set of \emph{arcs}. We depict an arc $(x,y)\in A$ by $\arc{x}{y}$. If for some nodes $x$ and $y$ we have both $(x,y)\in A$ and $(y,x)\in A$ then we refer to this pair of arcs collectively as an \emph{undirected edge}, and depict it as $\edge{x}{y}$. If $(x,y)\in A$ implies $(y,x)\in A$ for any pair of nodes, then all arcs are undirected edges, and we call $G$ an \emph{undirected graph}. Each graph $G$ has an associated undirected graph $G'$, called the \emph{underlying graph}, given by requiring that for each arc $(x,y)$ in $G$ there are arcs $(x,y)$ and $(y,x)$ in $G'$. Removing one of the arcs $(x,y)$ or $(y,x)$ from an undirected edge $\edge{x}{y}$ gives a standard arc, a process we call \emph{orienting} the undirected edge. An \emph{orientation} of an undirected graph is any graph formed by oriented some of its undirected edges. 
A \emph{path} is a sequence $v_1,v_2,\dots,v_n$ of distinct nodes where there is an arc $\arc{v_i}{v_{i+1}}$ for every $i$ in $1,2,\dots,n-1$. An \emph{undirected path} is a path in the underlying graph. If there is also an arc $\arc{v_n}{v_1}$, we call this a \emph{cycle}. A graph with no cycles is called \emph{acyclic}. If there is a path from a node $v$ to a node $w$ we say \emph{$w$ is reachable from $v$}. Reachability defines a preorder on the nodes of a graph. Two nodes are equivalent under this preorder if both are reachable from each other. The equivalence classes of this relation are called the \emph{strongly connected components}. The minimal elements of this order we call the \emph{sink components}.
For any subset $X\subseteq N$ of nodes, there is an associated graph given by including exactly the arcs between nodes in $X$. This is called the \emph{subgraph induced by $X$} or simply an \emph{induced subgraph}. Two graphs $(N_1,A_1)$ and $(N_2,A_2)$ are \emph{isomorphic} if there is a map $\varphi : N_1\to N_2$ where $\arc{v}{w}\in A_1$ if and only if $\arc{\varphi(v)}{\varphi(w)}\in A_2$.

All games in this paper are two-player normal-form games with finite strategy sets\cite{myerson1997game}. Such a game is defined by a pair of \emph{payoff functions} $u_1,u_2 : S_1\times S_2 \to \real$, where $S_1$ and $S_2$ are finite sets, called the \emph{strategy sets}, whose elements are \emph{strategies}. If $|S_1| = n$ and $|S_2| = m$, we call the game a $n\times m$ game. A \emph{strategy profile} is a pair $(s_1,s_2)\in S_1\times S_2$. We call $u_1(s_1,s_2)$ the \emph{payoff} to player 1 in the profile $(s_1,s_2)$. Two profiles are \emph{$i$-comparable} if they differ in the strategy of player $i$ only, and are \emph{comparable} if they are $i$-comparable for some $i$. We say that a strategy $s\in S_1 $ \emph{dominates} a strategy $t\in S_1$ if $u_1(s,r) > u_1(t,r)$ for every strategy $r\in S_2$, and the same definition holds analogously for player 2. The strategy $t$ is called \emph{dominated}. If we delete some dominated strategy (forming the subgame given by removing this strategy), other strategies can become dominated in the new game. This process is called \emph{iterated elimination of dominated strategies}\cite{myerson1997game}. Any strategy deleted during this process is called \emph{iteratively dominated}, and otherwise a strategy is said to \emph{survive iterated dominance}. If a game has only one profile that survives iterated dominance, then that profile is the unique pure Nash equilibrium, and we call the game \emph{dominance-solvable}. While \emph{mixed strategies} can also dominate strategies\cite{myerson1997game}, we focus here on the case where all strategies are pure.

The \emph{response graph} of the game is the graph whose node set is $S_1\times S_2$, with an arc $\arc{(s_1,s_2)}{(t_1,t_2)}$ if the profiles $(s_1,s_2)$ and $(t_1,t_2)$ are $i$-comparable and $u_i(t_1,t_2) \geq u_i(s_1,s_2)$. The \emph{weighted response graph} (called the \emph{game graph} in \cite{candogan2011flows}) has the additional property that the arc $\arc{(s_1,s_2)}{(t_1,t_2)}$ is weighted by the non-negative number $u_i(t_1,t_2) - u_i(s_1,s_2)$. If $u_i(t_1,t_2) = u_i(s_1,s_2)$ then say player $i$ is \emph{indifferent} between $(s_1,s_2)$ and $(t_1,t_2)$, and there are arcs in both directions, that is, there is an undirected edge $\edge{u_i(t_1,t_2)}{u_i(s_1,s_2)}$.  In the weighted response graph, undirected edges are weighted by zero. 
 A \emph{subgame} of a game is the game given by restricting $u_1$ and $u_2$ to the domain $T_1\times T_2$, where $T_1\subseteq S_1$ and $T_2\subseteq S_2$. A pure Nash equilibrium is a profile where all $i$-comparable profiles give player $i$ no improvement in payoff, for any $i$. Equivalently, $(s_1,s_2)$ is a pure Nash equilibrium if and only if for every comparable profile $(t_1,t_2)$ there is an arc $\arc{(t_1,t_2)}{(s_1,s_2)}$ in the response graph.
The sink components of the response graph have also been called \emph{Markov-Conley chains}\cite{papadimitriou2019game,omidshafiei2019alpha}, but in that context they were augmented with the structure of a Markov chain.

In game theory, a property of a game is \emph{generic} if almost all games in payoff space possess the property\cite{fudenberg1991game}. We shall focus one generic property in particular; specifically, the absence of undirected edges. We shall call a game \emph{generic} if the payoffs to player $i$ in two $i$-comparable profiles are never equal---that is, if its response graph has no undirected edges.

\begin{defn}
Two two-player games are \emph{preference-equivalent} if their response graphs are isomorphic. They are \emph{strategically-equivalent}\cite{candogan2011flows} if their \emph{weighted} response graphs are also isomorphic. 
\end{defn}

We observe first that strategic equivalence implies preference equivalence. Secondly, note that the graph isomorphism criterion implicitly handles renaming of strategies and reordering of players. As an example, the game $(u_1,u_2)$ and $(u_2,u_1)$ are strategically equivalent, because the map $\varphi: S_1\times S_2 \to S_2\times S_1$, $\varphi(a,b) = (b,a)$ defines an isomorphism of the weighted response graphs. While our focus is on preference-equivalence, we do make use of the more restrictive notion of strategic equivalence. Unlike preference equivalence, strategic equivalence has been well-studied in game theory\cite{morris2004best,hwang2020strategic,hwang2020simple,candogan2011flows,candogan2013dynamics} because Nash equilibria are invariant under strategic equivalence\cite{candogan2011flows}.

\section{Graphs from Games}

To motivate our thinking about response graphs, we begin by considering the $2\times 2$ generic games, the simplest non-trivial games. While there are infinitely many such games, there are only four non-isomorphic response graphs, which we call Matching Pennies (MP), Coordination (CO), Single-dominance (SD) and Double-dominance (DD). We can deduce this by brute force: the underlying graph of any $2\times 2$ response graph is an undirected 4-cycle, and there are four distinct orientations of this graph, shown in Figure~\ref{fig: 2x2 response graphs}.
\begin{figure}
    \centering
    \begin{subfigure}{0.22\textwidth}
    \centering
    \begin{tikzpicture}
    \node[nd] (A) at (0,0) {};
    \node[nd] (B) at (0,1.5) {};
    \node[nd] (C) at (1.5,0) {};
    \node[nd] (D) at (1.5,1.5) {};
    \draw[->] (A) to (B);
    \draw[->] (A) to (C);
    \draw[->] (C) to (D);
    \draw[->] (B) to (D);
    \end{tikzpicture}
    \caption{Double-dominance}
    \label{fig:p DD}
    \end{subfigure}
    \begin{subfigure}{0.22\textwidth}
    \centering
    \begin{tikzpicture}
    \node[nd] (A) at (0,0) {};
    \node[nd] (B) at (0,1.5) {};
    \node[nd] (C) at (1.5,0) {};
    \node[nd] (D) at (1.5,1.5) {};
    \draw[->] (A) to (B);
    \draw[->] (B) to (D);
    \draw[->] (A) to (C);
    \draw[->] (D) to (C);
    \end{tikzpicture}
    \caption{Single-dominance}
    \label{fig:p SD}
    \end{subfigure}
    \begin{subfigure}{0.22\textwidth}
    \centering
    \begin{tikzpicture}
    \node[nd] (A) at (0,0) {};
    \node[nd] (B) at (0,1.5) {};
    \node[nd] (C) at (1.5,0) {};
    \node[nd] (D) at (1.5,1.5) {};
    \draw[->] (A) to (B);
    \draw[->] (A) to (C);
    \draw[->] (D) to (B);
    \draw[->] (D) to (C);
    \end{tikzpicture}
    \caption{Coordination}
    \label{fig:p CO}
    \end{subfigure}
    \begin{subfigure}{0.22\textwidth}
    \centering
    \begin{tikzpicture}
    \node[nd] (A) at (0,0) {};
    \node[nd] (B) at (0,1.5) {};
    \node[nd] (C) at (1.5,0) {};
    \node[nd] (D) at (1.5,1.5) {};
    \draw[->] (B) to (A);
    \draw[->] (A) to (C);
    \draw[->] (C) to (D);
    \draw[->] (D) to (B);
    \end{tikzpicture}
    \caption{Matching Pennies}
    \label{fig:p MP}
    \end{subfigure}
    \caption{The four non-isomorphic response graphs of generic $2\times 2$ games.}
    \label{fig: 2x2 response graphs}
\end{figure}
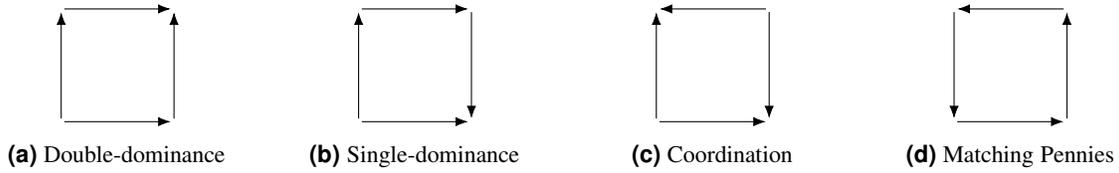

The Matching Pennies and Coordination graphs are named for some well-known games of the same name\cite{sandholm2010population,papadimitriou2018nash}. The Single- and Double-dominance graphs are named for the fact that they have one or two dominated strategies, respectively. These games showcase the influence of the response graph on two-player games: any game whose response graph is SD or DD is dominance-solvable; a game whose response graph is CO has two pure Nash equilibria; a game whose response graph is MP has no pure Nash equilibria. Any generic $2\times 2$ game possesses one of these response graphs. A \emph{non-generic} $2\times 2$ game has a response graph where some of these arcs are undirected edges; an example is shown in Figure~\ref{fig:weak MP}. We can fit such graphs into our classification with the notion of a \emph{weak form}.
\begin{defn}[Weak Form]
A graph $G$ is a \emph{weak form} of another graph $H$ if $H$ is an orientation of $G$.
\end{defn}

\begin{figure}
    \centering
    \begin{tikzpicture}
    \node[nd] (A) at (0,0) {};
    \node[nd] (B) at (0,1.5) {};
    \node[nd] (C) at (1.5,0) {};
    \node[nd] (D) at (1.5,1.5) {};
    \draw[->] (A) to (B);
    \draw (A) to (C);
    \draw[->] (D) to (C);
    \draw[->] (B) to (D);
    \end{tikzpicture}
    \caption{A graph that is weak MP and weak SD.}
    \label{fig:weak MP}
\end{figure}
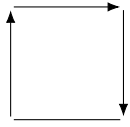

As an example, Figure~\ref{fig:weak MP} is a weak form of both SD and MP, as orienting the undirected edge gives either MP or SD. We say a graph $G$ \emph{contains} a graph $H$ if $H$ is an induced subgraph of $G$. We show later that weak forms of MP and CO are contained in all non-dominated two-player games.
There is an important fact to note in our presentation here. Unlike in Figure~\ref{fig:rps table graph}, where we labelled each node in the response graph by the associated profile, the graphs in Figure~\ref{fig: 2x2 response graphs} are not labelled by profiles. It turns out that this does not matter: if a graph is a response graph, the profiles can be recovered uniquely up to renaming of strategies.

\begin{thm} \label{reconstruction}
Given a graph $G$, we can construct a game whose response graph is $G$, or determine that no such games exist, in time linear in the number of arcs.
\end{thm}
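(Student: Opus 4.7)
The plan is to exploit the rigid combinatorial structure that a response graph inherits from its construction. The underlying undirected graph of any response graph is the rook graph $K_n \square K_m$: profiles are adjacent iff they differ in exactly one coordinate. I would first show that this structure, together with a valid assignment of arc orientations, can be recovered from $G$ alone and then reassembled into payoffs.

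The key local observation is that, in a genuine response graph, the neighbourhood $N(v)$ of any profile $v$ decomposes into exactly two cliques in the induced subgraph $G[N(v)]$. Profiles $i$-comparable to $v$ share the other player's strategy with $v$ and are therefore pairwise $i$-comparable (hence adjacent in $G$), while a $1$-comparable neighbour and a $2$-comparable neighbour differ from each other in both coordinates and are non-adjacent. Using this, the algorithm proceeds as follows: pick any vertex $v$, split $G[N(v)]$ into its two cliques, label $v$ as $(1,1)$, and label the cliques arbitrarily as the row $\{(1,2),\dots,(1,m)\}$ and the column $\{(2,1),\dots,(n,1)\}$ through $v$. Every other vertex $w$ must then be adjacent to a unique vertex $(1,j)$ in the row clique and a unique vertex $(i,1)$ in the column clique, which forces the label $(i,j)$. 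A single linear scan over the arcs of $G$ then verifies that (i) adjacencies occur exactly between profiles differing in one coordinate, and (ii) within each row and each column the directed arcs describe a total preorder, i.e.\ the strict-preference subgraph is acyclic. If all checks pass, construct payoffs by topologically sorting the strongly connected components of the strict-preference subgraph within each row (defining $u_2$) and each column (defining $u_1$), assigning integer payoffs equal to position in the sort.

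Correctness is by construction: the arcs within row $i$ are, by definition, the response-graph arcs of player~$2$ in the subgame with player~$1$'s strategy fixed, and any total preorder on a finite set admits a real-valued utility representation; uniqueness up to renaming follows from the fact that the two-clique decomposition of $N(v)$ is canonical (swapping the two cliques corresponds to swapping the players' labels). Conversely, each of the structural checks is a necessary condition, so a failure correctly certifies that no game realises $G$. For the time bound, note $|A| = \Theta(nm(n+m))$; the only step of quadratic flavour is the clique decomposition of $N(v)$, which costs $O((n+m)^2) \le O(|A|)$, while labelling all other vertices, scanning arcs for consistency, and running topological sort inside each row and column are all linear in $|A|$.

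The main obstacles I anticipate are the book-keeping needed to turn this description into an honestly linear-time procedure (in particular, decomposing $G[N(v)]$ without re-examining arcs, and marking each vertex's row and column label so that the adjacency checks cost $O(1)$ per arc), and the degenerate cases $n=1$ or $m=1$ where one of the two cliques around $v$ is empty and the player/coordinate assignment requires a separate argument. Both are handled by careful case analysis rather than a new idea.
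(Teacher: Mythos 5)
Your recognition step is where you genuinely depart from the paper: the paper simply invokes the $O(m)$ Hamming-graph recognition and labelling theorem of Hammack et al.\ as a black box, whereas you give a direct, self-contained recognition procedure for the rook graph $K_n \,\square\, K_m$ via the two-clique decomposition of each neighbourhood, the forced labelling through a fixed row and column, and a final adjacency check. That part is correct (the verification pass is what saves you from impostors such as the Shrikhande graph, whose neighbourhoods are $6$-cycles rather than two cliques), it is honestly linear, and it buys a proof that does not lean on an external structure theorem. The uniqueness-up-to-renaming discussion also matches the paper's.

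There is, however, a genuine gap in how you certify and realise the orientations, namely in the treatment of undirected edges (ties), which the theorem statement does not exclude. You assert that the arcs within a row describe a total preorder ``i.e.\ the strict-preference subgraph is acyclic,'' but these conditions are not equivalent. Take one row with three strategies $a,b,c$, one-directional arcs $a \to b$ and $b \to c$, and an undirected edge between $a$ and $c$. The strict-preference subgraph $a \to b \to c$ is acyclic, so your check passes; yet realising payoffs would require $u(a) < u(b) < u(c)$ together with $u(a) = u(c)$, so no game has this response graph and the algorithm must reject. Worse, your construction topologically sorts the strongly connected components of the \emph{strict} subgraph, which are all singletons whenever that subgraph is acyclic, so indifferent strategies receive distinct payoffs; even on realisable non-generic inputs the output game's response graph has a strict arc wherever $G$ has an undirected edge. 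The correct local test is on the \emph{full} row digraph: every pair of profiles in the row must be joined by at least one arc, and every cycle must consist only of undirected edges --- equivalently, no one-directional arc may lie inside a strongly connected component of the full row digraph, and each such component must be an undirected clique; payoffs are then assigned by topologically ordering the condensation of the full digraph, giving equal payoffs within a component. This fix is local and preserves linearity, but as written your algorithm returns wrong answers on ties. (In fairness, the paper's own verification step --- ``directed such that it is acyclic'' --- is phrased just as loosely, though its realisation step at least names the right condition, a total order on each line.)
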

This theorem follows from the fact that the underlying graphs of response graphs are \emph{Hamming graphs}\cite{hammack2011handbook}. It tells us that the graph structure is alone sufficient to analyse the preference orders in the game. Further, this allows us to represent \emph{implicitly} the independence of the game from renaming of strategies or reordering of players. Consequently, we can present our graphs in the natural graph-theoretic way (up to isomorphism) without losing any game-theoretic information. Consider Figures~\ref{fig:rps table graph} and \ref{fig:rps round}, both of which depict the response graph of Rock-Paper-Scissors. While Figure~\ref{fig:rps table graph} mimics the payoff table structure of Table~\ref{tab:rps}, Figure~\ref{fig:rps round} makes clear the symmetric M\"obius-strip-like structure of the graph, which is otherwise obscure. In much the same way, we find that the presence of subgames with the structure of MP or CO can also be expressed graph-theoretically.

\begin{lem} \label{lem: subgame lemma}
If the response graph of a two-player game contains the response graph of a $2\times 2$ game, then the profiles which take part form a $2\times 2$ subgame.
\end{lem}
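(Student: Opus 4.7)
The plan is to reduce the statement to a purely combinatorial fact about the response graph: any four profiles whose induced subgraph has the underlying graph of a $4$-cycle must form a $2\times 2$ subgame. Since the underlying graph of every $2\times 2$ response graph (including weak forms) is a $4$-cycle, this reduction handles all cases of the lemma at once.

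So I would start by letting $p_1,p_2,p_3,p_4$ be the four profiles in question and, by assumption, arrange them as an induced $4$-cycle $p_1\edge{}{}p_2\edge{}{}p_3\edge{}{}p_4\edge{}{}p_1$ in the underlying graph of the response graph. The first key step is to show that two \emph{consecutive} edges of this cycle cannot be labelled by the same player. Indeed, if $p_1\edge{}{}p_2$ and $p_2\edge{}{}p_3$ were both $i$-comparable, then player $(3-i)$'s strategy would be constant across $p_1,p_2,p_3$, making $p_1$ and $p_3$ also $i$-comparable. But then the response graph would contain an arc between $p_1$ and $p_3$, contradicting the fact that the induced subgraph on $\{p_1,p_2,p_3,p_4\}$ is a $4$-cycle.

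The second step is to assign coordinates. Write $p_1 = (a,c)$, and, without loss of generality, take $p_1\edge{}{}p_2$ to be $1$-comparable, so $p_2 = (b,c)$ with $b\neq a$. By the previous step, $p_2\edge{}{}p_3$ is $2$-comparable, giving $p_3 = (b,d)$ with $d\neq c$; then $p_3\edge{}{}p_4$ is $1$-comparable, giving $p_4 = (a',d)$ with $a'\neq b$; and finally $p_4\edge{}{}p_1$ is $2$-comparable, which forces $a' = a$. Hence $\{p_1,p_2,p_3,p_4\} = \{a,b\}\times\{c,d\}$, which is precisely the profile set of a $2\times 2$ subgame. A quick check confirms that the diagonals $p_1,p_3$ and $p_2,p_4$ differ in both coordinates and are therefore non-comparable, so the induced subgraph is genuinely a $4$-cycle and not something richer, consistent with our hypothesis.

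I do not foresee a serious obstacle: the only subtlety is item (a) that the hypothesis covers the non-generic/weak-form case, where some of the $4$-cycle's edges are undirected, and item (b) correctly ruling out the ``same player twice in a row'' scenario. Both are handled by the single observation that if two consecutive edges belonged to the same player then a third edge (on the $p_1p_3$ or $p_2p_4$ diagonal) would be forced into the induced subgraph, contradicting its shape. Everything else is bookkeeping on the two coordinates of a strategy profile.
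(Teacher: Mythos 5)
Your proposal is correct and follows essentially the same route as the paper's proof: label the four profiles around the induced $4$-cycle, observe that consecutive edges must alternate between the two players (else a chord would appear, contradicting the induced $4$-cycle), and conclude the profiles are $\{a,b\}\times\{c,d\}$. Your write-up merely makes explicit the alternation step that the paper leaves implicit.
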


In particular, every appearance of MP or CO in a response graph occurs in four profiles which make up a subgame of the associated game. Hence we can interchangeably use `the response graph contains MP' and `the game has a $2\times 2$ subgame whose response graph is isomorphic to MP', because these statements mean the same thing.

\section{Two-Player Zero-Sum and Potential Duality}

In this section we discuss two famous classes of games: zero-sum games~\cite{von2007theory} and potential games~\cite{monderer1996potential}. We characterise these classes up to preference-equivalence; currently they have only been characterised up to the more restrictive notion of strategic equivalence\cite{hwang2020strategic}. Generic preference-potential games turn out to be precisely those whose response graphs are acyclic (this is straightforward to prove, and follows from results in\cite{cruz2000ordinal}). Using a relationship between strategically-potential and strategically-zero-sum games, we establish a duality between preference-potential and preference-zero-sum games, and use this to characterise the generic preference-zero-sum games as the \emph{reflected} acyclic games.

\begin{defn} \label{def: potential games}
A two-player game $(u_1,u_2)$ is called a \emph{potential game}\cite{monderer1996potential} if there is a function $\phi:S_1\times S_2 \to\real$ such that for every pair of $i$-comparable profiles $p$ and $q$, $\phi(p) - \phi(q) = u_i(p) - u_i(q)$. A game is \emph{preference-potential} if it is preference-equivalent to some potential game. It is \emph{strategically-potential} if it is strategically-equivalent to some potential game.
\end{defn}

That is, the relative payoffs to each player can be defined by a single real-valued function, named the \emph{potential function}, by analogy with physics. There, a dynamic $f$ is called \emph{potential} if $f = \nabla \varphi$, where $\varphi$ is a real-valued function. This means that $f$ is a gradient vector field. As we know from vector calculus~\cite{hubbard2015vector}, such vector fields are exactly those that are \emph{conservative}. Additionally, the fundamental theorem of calculus holds, and so $f$ is \emph{path-independent}---that is, the path integral of $f$ is always the difference between the values of $\varphi$ at the endpoints. 
In game theory, potential games are notable because they guarantee the existence of a pure Nash equilibrium~\cite{monderer1996potential}. Intuitively, the existence of a potential function prevents cycles of preference. This idea is well-captured by the response graph---in fact, a generic game is preference-potential if and only if its response graph is acyclic (Corollary~\ref{corol: preference duality}).

\begin{defn}
A two-player game $u$ is \emph{zero-sum} if $u_1(s_1,s_2) + u_2(s_1,s_2) = 0$ for any strategies $s_1$ and $s_2$ for players 1 and 2 respectively. A two-player game is \emph{preference-zero-sum} if it is preference-equivalent to a zero-sum game. It is \emph{strategically-zero-sum} if it is strategically-equivalent to a zero-sum game.
\end{defn}

Intuitively, zero-sum games capture the notion that one player's gain is always the other player's loss. This model aligns closely with the recreational games from which game theory takes its name; there, if one player wins, the other must lose. From the graph perspective, this suggests that the preference orders of the players in a zero-sum game are never aligned. Hearing this, one might suspect that response graphs like CO do not occur in zero-sum games. This guess does turn out to be correct, and the insight gained leads to a characterisation of preference-zero-sum games.

\begin{exmp}[Coordination is not preference-zero-sum] \label{coordination not zerosum}
Let $(u_1,u_2)$ be a $2\times 2$ zero-sum game, with payoffs $a$, $b$, $c$ and $d$ for player 1 in each of the four profiles, and their negations $-a$, $-b$, $-c$ and $-d$ as the payoffs to player 2. We assume for contradiction that this game has the response graph of Coordination. The setup is shown in Figure~\ref{fig:zerosum coordination}. To achieve this response graphs, the relative payoffs $c-d$, $b-a$, $a-c$ and $d-b$ must each be positive. However this implies that $c>d$, $b>a$, $a>c$ and $d>b$, giving the strict cycle $d > b > a > c > d$, which is impossible, and so we obtain a contradiction.
\begin{figure}
    \centering
    \begin{tikzpicture}
    \node (bl) at (0,0) {$(a,-a)$};
    \node (br) at (3,0) {$(b,-b)$};
    \node (tl) at (0,2) {$(c,-c)$};
    \node (tr) at (3,2) {$(d,-d)$};
    \draw[->] (bl) to node[above] {$b-a$} (br);
    \draw[->] (tr) to node[above] {$c-d$} (tl);
    \draw[->] (tr) to node[right] {$(-b)-(-d)$} (br);
    \draw[->] (bl) to node[left] {$(-c)-(-a)$} (tl);
    \end{tikzpicture}
    \caption{The structure of Coordination in a zero-sum game. For this to be possible, we would need $c-d > 0$, $b- a>0$, $(-b)-(-d)>0$ and $(-c)-(-a) > 0$, giving an impossible cycle $d > b > a > c > d$.}
    \label{fig:zerosum coordination}
\end{figure}
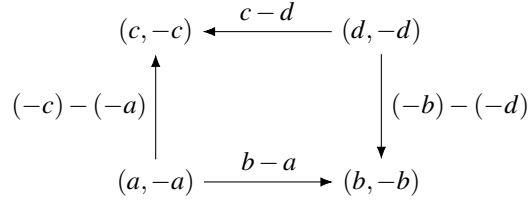
\end{exmp}

This example demonstrates that preference-zero-sum (and strategically-zero-sum) games are a non-trivial set of games. Usefully, the structure of the proof also suggests a way of characterising this set. The critical fact was the existence of a strict \emph{cycle} in the underlying utilities. In some sense, zero-sum games are \emph{acyclic}. To uncover this cycle, we use a transformation we call \emph{reflection}.

\begin{defn} \label{def: reflection}
Let $(u_1,u_2)$ be a two-player game. The \emph{reflected game} is $(u_1,-u_2)$. The \emph{reversed game} is $(-u_1,-u_2)$.
\end{defn}

Note that we made an arbitrary choice here; we could just as easily have defined the reflected game as $(-u_1,u_2)$ (`reflecting' the game in player 1 rather than player 2). These two games are not equivalent, but they are reversals of each other: $-(u_1,-u_2) = (-u_1,u_2)$. 
Our theorems are symmetric under reversal, so both choices work equally well.
Reversing a game has the effect of reversing all arcs in the response graph. 
\begin{defn}[Path-weight]
Let $p = x_1,x_2,\dots,x_n$ be a path in the response graph. The \emph{path-weight} of $p$ is the (signed) sum of arc labels along $p$, that is
\[
\pathweight(p) = \sum_{i=1}^{n-1} (u_{p_i}(x_i) - u_{p_i}(x_{i+1}))
\] where $p_i$ is the unique player such that $x_i$ and $x_{i+1}$ are $p_i$-comparable.
\end{defn}
We get the following theorem:
\begin{thm}[Strategic Zero-Sum--Potential Duality] \label{strategic duality}
A two-player game $(u_1,u_2)$ is strategically-potential if and only if the path-weight of any path between the same two nodes is identical. It is strategically-zero-sum if and only if its reflection $(u_1,-u_2)$ is strategically-potential.
\end{thm}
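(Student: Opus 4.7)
The plan is to handle the two claims in sequence, with the duality in part two following readily from part one together with the observation that reflection interacts predictably with strategic equivalence.

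For the potential characterisation, I would exploit that strategic equivalence preserves the weighted response graph, and hence the arc weights that enter $\pathweight$. For the forward direction, if $(u_1,u_2)$ is strategically-equivalent to a potential game $(v_1,v_2)$ with potential $\phi$, then along any path $x_1,\dots,x_n$ each summand $v_{p_i}(x_i)-v_{p_i}(x_{i+1})$ equals $\phi(x_i)-\phi(x_{i+1})$ by the defining property of the potential; the sum telescopes to $\phi(x_1)-\phi(x_n)$, which depends only on the endpoints. For the reverse direction, assuming $\pathweight$ is path-independent in $(u_1,u_2)$, I would fix a base profile $x_0$, observe that the underlying graph of the response graph is connected, and define $\phi(x):=\pathweight(q)$ along any path $q$ from $x$ to $x_0$; this is well-defined by hypothesis. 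A direct check by concatenating paths through an $i$-comparable pair $p,q$ shows $\phi(p)-\phi(q)=u_i(p)-u_i(q)$, so $(u_1,u_2)$ is itself a potential game and in particular strategically-potential.

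For part two, the key step is to verify that reflection commutes with strategic equivalence: the weighted response graph of $(u_1,-u_2)$ is obtained from that of $(u_1,u_2)$ by reversing the orientation of every player-2 arc (arc weights are preserved), an operation that respects graph isomorphism. A zero-sum game $(v,-v)$ then reflects to $(v,v)$, which is trivially potential with potential $v$, giving one direction. For the converse, if $(u_1,-u_2)$ is strategically-potential with potential $\phi$, I observe that any potential game $(w_1,w_2)$ with potential $\phi$ is strategically-equivalent to $(\phi,\phi)$: the difference $w_i-\phi$ is constant in player $i$'s own strategy (immediate from the potential equation), and subtracting such a function from $w_i$ leaves every arc weight unchanged. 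Hence $(u_1,-u_2)$ is strategically-equivalent to $(\phi,\phi)$, and reflecting both sides yields $(u_1,u_2)$ strategically-equivalent to $(\phi,-\phi)$, a zero-sum game.

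The main subtlety is handling strategic equivalence correctly: the paper's definition permits relabeling of strategies and players via weighted-graph isomorphism, but nearly all of the argument above takes place in a single fixed game, using only the pointwise quantities $u_{p_i}(x_i)-u_{p_i}(x_{i+1})$ that furnish the arc weights, so the renamings can be absorbed into the bijections between profile sets. The trickiest step to get right is the reduction of an arbitrary potential game to $(\phi,\phi)$; it requires carefully verifying that subtracting the function $w_i-\phi$ (which depends only on the other player's strategy) from $w_i$ does indeed leave every weighted-response-graph arc label intact, so that strategic equivalence is genuinely preserved rather than merely preference equivalence.
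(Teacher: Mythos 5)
Your proposal is correct and follows essentially the same route as the paper's proof: telescoping the path-weight against the potential for the forward direction, defining the potential from path-weights to a base profile for the converse, and reducing a potential game to $(\phi,\phi)$ so that reflection carries it to the zero-sum game $(\phi,-\phi)$. Your choice of an arbitrary base profile (rather than the paper's minimal element of a path-weight preorder) and your explicit check that subtracting $w_i-\phi$ preserves arc weights are harmless simplifications/elaborations of the same argument.
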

It follows easily that a potential game cannot have any strict cycles, as any path from a node to itself must have zero path-weight. Recall the analogy with path-independence and conservative vector fields in calculus. Here the path integral is replaced with the sum over weights on a path in the response graph, and one finds that the value of this `path integral' is equal to the difference in potential between the two endpoints of the path. Interestingly, the reflection operation mirrors the relationship between potential and \emph{Hamiltonian} vector fields\cite{alongi2007recurrence, hubbard2015vector}, which are known to be connected to zero-sum games\cite{hofbauer1996evolutionary,sandholm2010population,balduzzi2018mechanics}. Now we find that a combinatorial analogue of this relationship is captured in the response graph. A characterisation of preference-potential and preference-zero-sum games follows easily.
\begin{corol}[Preference Zero-Sum--Potential Duality] \label{corol: preference duality}
A two-player game $(u_1,u_2)$ is preference-potential if and only if every cycle in its response graph contains only undirected edges. It is preference-zero-sum if and only if its reflection $(u_1,-u_2)$ is preference-potential.
\end{corol}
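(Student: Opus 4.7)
The plan is to derive both halves of the corollary from Theorem~\ref{strategic duality}, combined with the fact that preference-equivalence is an isomorphism of response graphs which preserves the distinction between directed arcs and undirected edges.

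For the first characterisation, in the forward direction I would note that if $(u_1,u_2)$ is preference-potential then it has a preference-equivalent potential game $(v_1,v_2)$, which by Theorem~\ref{strategic duality} has path-independent path-weights. Any cycle in the response graph of $(v_1,v_2)$ therefore has path-weight zero; but the formula makes each term $u_{p_j}(x_j) - u_{p_j}(x_{j+1})$ non-positive (arcs point to weakly higher payoff), so each term must vanish, forcing every arc in the cycle to be an undirected edge. The cycle condition transfers across the iso to $(u_1,u_2)$. For the converse, suppose the response graph $G$ of $(u_1,u_2)$ satisfies the cycle condition. Let $\sim$ be the equivalence relation ``connected by a walk of undirected edges''. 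The quotient graph consists only of one-way arcs between distinct classes, and is acyclic: any quotient cycle would lift to a closed walk in $G$ containing a one-way arc, from which a simple cycle containing a one-way arc could be extracted, contradicting the hypothesis. Topologically order the classes and choose $\phi \colon S_1 \times S_2 \to \real$ constant on each class and strictly increasing along every one-way arc. Then $(\phi,\phi)$ is a potential game whose response graph is exactly $G$, so $(u_1,u_2)$ is preference-potential.

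For the second characterisation, I would use the symmetry of reflection. Suppose $(u_1,u_2) \equiv_P (v_1,-v_1)$ for some zero-sum game. Reflecting a game reverses precisely the player-$2$ arcs in its response graph. A response-graph isomorphism respects the Hamming structure (Theorem~\ref{reconstruction}) and therefore either preserves or swaps player labels; in either case reflecting both sides of the equivalence yields $(u_1,-u_2) \equiv_P (v_1,v_1)$ or $(u_1,-u_2) \equiv_P (-v_1,-v_1)$, and each right-hand side is a potential game (with potential $\pm v_1$). Hence the reflection is preference-potential. Conversely, if $(u_1,-u_2)$ is preference-equivalent to some potential game $(w_1,w_2)$ with potential $\phi$, then $w_i(p) - w_i(q) = \phi(p) - \phi(q)$ on every $i$-comparable pair, so $(w_1,w_2)$ and $(\phi,\phi)$ have the same response graph. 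Thus $(u_1,-u_2) \equiv_P (\phi,\phi)$, and reflecting back gives $(u_1,u_2) \equiv_P (\phi,-\phi)$, which is zero-sum.

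The main obstacle, visible in both directions of the second claim, is handling the player-swap ambiguity: a response-graph isomorphism need not preserve player labels, so ``reflecting both sides'' of an equivalence is delicate. This is resolved by the observation, already noted in the text, that reflecting in player $1$ versus player $2$ produces games which are reversals of each other, and that the zero-sum and potential classes are both stable under reversal and under re-labelling of players. Thus the ambiguity changes at most which of $(v_1,v_1)$, $(-v_1,-v_1)$, $(\phi,-\phi)$ or $(-\phi,\phi)$ appears as the witness, and the conclusion of the corollary is unaffected.
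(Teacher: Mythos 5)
Your proof is correct and follows essentially the same route as the paper's: the acyclicity characterisation is obtained by building a potential function that is constant on the (undirected-edge / strongly connected) classes and increases along a topological order of the quotient, and the zero-sum half is obtained by reflecting the game and appealing to Theorem~\ref{strategic duality}. The only notable difference is that you explicitly handle the player-swap ambiguity in response-graph isomorphisms when ``reflecting both sides'' of a preference-equivalence, a point the paper's proof passes over silently.
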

It is clear now that the existence of pure Nash equilibria extends from (generic) potential games to (generic) preference-potential games. As acyclic graphs, every strongly connected component is a singleton, and so all sink components are singletons, and singleton sink components are pure Nash equilibria. It also is immediate that no generic preference-potential game ever contains MP, because this is a cycle. With this theorem in mind, we can return to Example~\ref{coordination not zerosum}. The reflection of Coordination is Matching Pennies---that is, a cycle---and so we conclude immediately that Coordination is not preference-zero-sum. This is shown in Figure~\ref{fig:reflection of coordination}. In fact, given that the reflection of a zero-sum game cannot have any cycles, CO is never contained in any zero-sum game.

\begin{figure}
    \centering
    \begin{tikzpicture}
    \node[nd] (bl) at (0,0) {};
    \node[nd] (br) at (1.5,0) {};
    \node[nd] (tl) at (0,1.5) {};
    \node[nd] (tr) at (1.5,1.5) {};
    \draw[->,blue] (br) to node[above] {} (bl);
    \draw[->,blue] (tl) to node[above] {} (tr);
    \draw[->] (br) to node[right] {} (tr);
    \draw[->] (tl) to node[left] {} (bl);
    
    \node[nd] (bl) at (4,0) {};
    \node[nd] (br) at (5.5,0) {};
    \node[nd] (tl) at (4,1.5) {};
    \node[nd] (tr) at (5.5,1.5) {};
    \draw[->,blue] (bl) to node[above] {} (br);
    \draw[->,blue] (tr) to node[above] {} (tl);
    \draw[->] (br) to node[right] {} (tr);
    \draw[->] (tl) to node[left] {} (bl);
    \end{tikzpicture}
    \caption{The graph of Coordination (left) and its reflection in player 2, which is the graph of Matching Pennies (right). Note that reflecting the game in the preferences of player 1 swaps between the same two graphs.}
    \label{fig:reflection of coordination}
\end{figure}
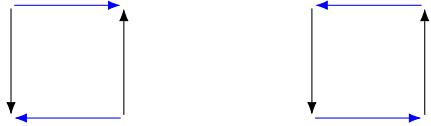

\begin{corol} \label{corol: no CO}
Every weak form of CO contained in a preference-zero-sum game is made up of only undirected edges. Likewise, every weak form of MP contained in a preference-potential game is made up of only undirected edges.
\end{corol}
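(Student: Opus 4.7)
The plan is to handle the two statements in sequence, deriving the CO case from the MP case via Corollary~\ref{corol: preference duality}, so that the real work reduces to understanding how cycles behave in preference-potential games.

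First I would tackle the MP claim. By Corollary~\ref{corol: preference duality}, the response graph of a preference-potential game has the property that every (directed) cycle consists only of undirected edges. Suppose some weak form $H$ of MP sits inside the response graph as an induced subgraph, and for contradiction that $H$ contains at least one genuinely directed arc. Because $H$ is a weak form of MP, its underlying graph is a 4-cycle and its directed arcs are coherently oriented in the unique rotational direction that extends to MP. Since undirected edges may be traversed in either direction in the response graph, I can close up a directed 4-cycle in the response graph by walking around $H$ following the directed arcs and choosing the matching orientation of each undirected edge. This is a cycle in the response graph that contains a true directed arc, which contradicts the potential characterisation. Hence every arc of $H$ must in fact be an undirected edge.

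Next I would derive the CO case from the MP case by reflection. Suppose $H$ is a weak form of CO contained in a preference-zero-sum game $(u_1,u_2)$, sitting on four profiles $T_1\times T_2$ (using Lemma~\ref{lem: subgame lemma}). Consider the reflected game $(u_1,-u_2)$, which is preference-potential by Corollary~\ref{corol: preference duality}. Reflection reverses precisely the $2$-comparable arcs and leaves the $1$-comparable arcs untouched; on an undirected edge it swaps a pair of oppositely directed arcs, producing another undirected edge. A quick case check (the CO-to-MP reflection is exactly the picture in Figure~\ref{fig:reflection of coordination}) shows that reflecting any orientation of the CO underlying graph yields an orientation of the MP underlying graph, and hence reflection carries weak forms of CO bijectively to weak forms of MP, preserving which edges are undirected. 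So the reflected subgraph is a weak form of MP contained in a preference-potential game, and by the first part consists only of undirected edges; reflecting back, every edge of $H$ is undirected as well.

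The only mild obstacle I anticipate is the bookkeeping of the second step: one needs to confirm that reflection really is a graph-level involution that exchanges the two weak forms and sends undirected edges to undirected edges. Once this is spelled out, both statements follow from the single content-bearing lemma that a directed cycle formed from the arcs of a weak form of MP would violate acyclicity of the preference-potential response graph.
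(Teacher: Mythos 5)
Your proposal is correct and follows essentially the same route as the paper: reduce the CO case to the MP case via reflection (using Lemma~\ref{lem: subgame lemma} to get a genuine $2\times 2$ subgame), and rule out a directed arc in a weak form of MP inside a preference-potential game because it would complete to a cycle containing a directed arc, contradicting Corollary~\ref{corol: preference duality}. You merely spell out two steps the paper leaves implicit (why a partially directed weak MP yields a forbidden cycle, and why reflection carries weak forms of CO to weak forms of MP while preserving undirected edges), which is fine.
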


One of the fundamental results of two-player zero-sum games is that the set of Nash equilibria is convex. Indeed, finding Nash equilibria in a two-player game is equivalent to linear programming\cite{myerson1997game}. In non-degenerate\cite{quint1997theorem} zero-sum games, the Nash is unique, and so there is \emph{at most one} pure Nash equilibria. Surprisingly, this uniqueness generalises to the sink components of preference-zero-sum games.

\begin{defn}[Near-Subgame]
Let $X \subseteq S_1\times S_2$ be a set of pairs. We say $X$ is a \emph{near-subgame} if for each pair $(s_1,s_2)$ and $(t_1,t_2)$ in $X$, at least one of $(s_1,t_2)$ or $(t_1,s_2)$ is in $X$.
\end{defn}

If we required instead that for each $(s_1,s_2)$ and $(t_1,t_2)$ in $X$, \emph{both} $(s_1,t_2)$ or $(t_1,s_2)$ were in $X$, then $X$ is a subgame of the game. The name \emph{near-subgame} reflects the fact that this is a slight weakening of that requirement. In Figure~\ref{fig:outer} we show a game whose sink component is a near-subgame but not also a subgame.

\begin{thm}[Uniqueness of the sink component]\label{one sink}
If a game does not contain Coordination, then the set of sink component profiles is a near-subgame; as a consequence, the game has exactly one sink component.
\end{thm}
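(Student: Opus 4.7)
The plan is to establish the near-subgame property via contradiction, locating a Coordination pattern whenever the property fails, and then to deduce uniqueness from the near-subgame property together with the fact that sinks cannot reach outside themselves.

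First I would pick two profiles $(s_1,s_2)$ and $(t_1,t_2)$ lying in sink components (possibly the same) and assume for contradiction that neither off-diagonal profile $(s_1,t_2)$ nor $(t_1,s_2)$ belongs to any sink component. The four profiles form a $2\times 2$ subgame, and Lemma~\ref{lem: subgame lemma} guarantees that the induced subgraph on them is exactly this subgame's response graph, with one arc (or undirected edge) between each comparable pair. The key observation is that for any arc joining a sink-component profile $P$ to a profile $Q$ that lies in no sink, the arc must point $Q \to P$: an arc $P \to Q$ would put $Q$ in $P$'s sink component, and an undirected edge $P \edge{}{} Q$ contains that forbidden direction, so in either case $Q$ would end up in a sink. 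Applied to the four arcs connecting the two ``sink corners'' and the two ``non-sink corners'' of the $2\times 2$, this forces all four arcs to point strictly toward $(s_1,s_2)$ and $(t_1,t_2)$, producing exactly the Coordination pattern and contradicting the hypothesis.

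For the uniqueness statement, I would suppose two distinct sink components $\Sigma_1,\Sigma_2$ exist with $p=(a_1,a_2)\in\Sigma_1$ and $q=(b_1,b_2)\in\Sigma_2$. Since $p$ and $q$ lie in different sink components they cannot share a coordinate (otherwise they would be comparable and the arc between them would merge the two sinks), so the mixed profiles $(a_1,b_2)$ and $(b_1,a_2)$ are both distinct from $p$ and $q$ and comparable to each of them. By the near-subgame property just proved, at least one mixed profile --- say $(a_1,b_2)$ --- belongs to some sink component $\Sigma_3$. The auxiliary observation that any arc between two sink-component profiles forces those sink components to coincide (because each sink can only reach itself, in either direction of the arc) then yields $\Sigma_3=\Sigma_1$ from the adjacency to $p$ and $\Sigma_3=\Sigma_2$ from the adjacency to $q$, giving the contradiction $\Sigma_1=\Sigma_2$. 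Since every finite digraph has at least one sink strongly connected component, exactly one follows.

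The main obstacle will be verifying that the induced orientation produced in the contradiction step is really Coordination, rather than some other graph sharing the same underlying $4$-cycle; this depends on the symmetric roles of the two sink corners versus the two non-sink corners, and on handling undirected edges uniformly so the argument covers non-generic games. The rest of the proof consists of short but repeated applications of the single principle that an arc leaving a sink component cannot exist.
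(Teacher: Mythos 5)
Your proof is correct and follows essentially the same route as the paper's: derive the near-subgame property by showing that two sink-corner profiles with two non-sink off-diagonal profiles would force every arc of the induced $4$-cycle to point into the sink corners, yielding a Coordination subgame, and then obtain uniqueness by chaining the observation that any arc between comparable profiles lying in sink components forces those components to coincide. Your explicit handling of undirected edges and of the shared-coordinate degenerate case is slightly more careful than the paper's, but the argument is the same.
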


Thus we find that CO is responsible for the phenomenon of non-uniqueness of the sink components in games. Consequently it is the cause of the equilibrium selection problem\cite{harsanyi1988general}, at least for pure Nash equilibria. Preference-zero-sum games do not suffer from this problem, because they do not contain CO. Thus simply sharing a response graph with a zero-sum game is sufficient to ensure that there is a unique sink component.

\begin{corol} \label{one PNE}
A preference-zero-sum game has exactly one sink component, and if generic has at most one pure Nash equilibrium.
\end{corol}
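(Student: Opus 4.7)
The plan is to chain Corollary~\ref{corol: no CO} with Theorem~\ref{one sink}, and then draw the corresponding consequence for pure Nash equilibria.

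First I would show that a preference-zero-sum game cannot contain Coordination as an induced subgraph. Since CO is trivially a weak form of itself, Corollary~\ref{corol: no CO} forces every induced copy of CO inside a preference-zero-sum game to consist entirely of undirected edges. But the CO graph in Figure~\ref{fig:p CO} has all four arcs strictly directed, so no such induced copy can exist. Applying Theorem~\ref{one sink} then immediately gives that the game has exactly one sink component, which is the first half of the corollary.

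For the second half I would invoke genericity directly. In a generic game the response graph has no undirected edges, so a pure Nash equilibrium---a profile where every comparable neighbour has an arc pointing inward---has no outgoing arcs at all. Such a node is a singleton strongly connected component and, having no arcs leaving it, is itself a sink component. Because the game has exactly one sink component by the previous step, there can be at most one pure Nash equilibrium. I do not foresee any genuine obstacle: the entire argument is a short deduction from previously-established results. The only subtle points are reading \enquote{does not contain Coordination} in Theorem~\ref{one sink} as the strict CO graph of Figure~\ref{fig:p CO} (which is precisely what Corollary~\ref{corol: no CO} rules out), and noting that we conclude \emph{at most} one pure Nash equilibrium rather than exactly one because the unique sink component need not be a singleton---it may be a larger strongly connected subgraph containing no Nash equilibrium at all.
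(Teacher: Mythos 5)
Your proposal is correct and follows essentially the same route as the paper: rule out CO via Corollary~\ref{corol: no CO}, apply Theorem~\ref{one sink} for uniqueness of the sink component, and observe that a pure Nash equilibrium in a generic game is a singleton sink component. Your added care about where genericity enters (a pure NE has no outgoing arcs only when there are no undirected edges) and about ``at most'' versus ``exactly'' one is a slight refinement of the paper's terser argument, not a different approach.
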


\section{The Importance of Matching Pennies and Coordination}

In the previous section we used the $2\times 2$ games Matching Pennies and Coordination as the prototypical examples of preference-zero-sum and preference-potential games respectively. In this section we show that these two games play a key role in introducing strategic complexity to two-player games in any number of strategies.

\begin{thm} \label{ CO and MP theorem}
In any non-dominance-solvable two-player game, every strategy surviving iterated dominance takes part in a subgame that is a weak form of Matching Pennies or Coordination.
\end{thm}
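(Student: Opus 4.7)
The plan is to argue the contrapositive inside the residual game. Fix a player~1 strategy $s$ surviving iterated dominance, assume for contradiction that $s$ lies in no $2\times 2$ subgame whose response graph is a weak form of MP or CO, and let $G^\ast$ denote the subgame on the strategies surviving iterated strict dominance. Then $s$ is a strategy of $G^\ast$ and no strategy of $G^\ast$ is strictly dominated; since the game is not dominance-solvable, in the generic setting we may further assume each player has at least two strategies in $G^\ast$.

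First I would recast the forbidden structure in graph terms. By direct inspection of the four $2\times 2$ orientations, a $2\times 2$ subgame is a weak form of MP or CO precisely when the two player~1 arcs can be simultaneously oriented to agree AND the two player~2 arcs can as well; equivalently, it is not weak MP/CO exactly when at least one player strictly dominates in the subgame (both of that player's arcs are directed and line up consistently). The standing assumption therefore reads: for every $s'\neq s$ in $G^\ast$ and every pair $t,t'$ of player~2 strategies in $G^\ast$, the subgame $\{s,s'\}\times\{t,t'\}$ has strict dominance by player~1 or by player~2.

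The crux is to pick $t_{\max},t_{\min}$ in $G^\ast$ that uniquely maximise and minimise $u_2(s,\cdot)$ (available in the generic case) and deduce that $t_{\max}$ strictly dominates $t_{\min}$ globally in $G^\ast$, contradicting the absence of strictly dominated strategies. Fix $s'\neq s$ and partition the player~2 strategies of $G^\ast$ into $T_+^{s'}$ and $T_-^{s'}$ according to whether player~1 strictly prefers $s$ or $s'$ against the given $t$; both parts are non-empty because $s$ and $s'$ do not dominate each other. If $t_{\max}$ and $t_{\min}$ lie in opposite parts, the subgame $\{s,s'\}\times\{t_{\max},t_{\min}\}$ has player~1 strictly flipping, so the standing assumption forces player~2 to strictly dominate in it; combined with $u_2(s,t_{\max})>u_2(s,t_{\min})$ this yields $u_2(s',t_{\max})>u_2(s',t_{\min})$. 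If instead $t_{\max},t_{\min}$ lie in the same part, pick any $r$ from the other part and run the same argument on $(t_{\max},r)$ and $(r,t_{\min})$, chaining the two strict inequalities produced. Together with the base case $s'=s$, this gives strict global dominance of $t_{\max}$ over $t_{\min}$, the desired contradiction.

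The main obstacle I anticipate is the non-generic case. Ties in $u_2(s,\cdot)$ can prevent a unique choice of $t_{\max},t_{\min}$, and undirected arcs broaden what it means for a $2\times 2$ subgame to fail to be weak MP/CO; one must reformulate "strict dominance of a player" in the presence of tied arcs and verify that each step of the chain still delivers strict inequality (a key observation being that an undirected player~1 arc paired with a strict player~1 arc leaves player~1 without strict dominance, which by the standing assumption forces player~2 to strictly dominate). A cleaner alternative is a small payoff perturbation to a nearby generic game; one must then check that the weak MP/CO $2\times 2$ subgame found in the perturbed game descends to a weak MP/CO subgame of the original, which is straightforward because a strict arc present in the perturbation corresponds to either a strict or undirected arc of the original, compatibly with being a weak form.
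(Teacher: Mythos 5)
Your generic-case argument is correct, and it takes a genuinely different route from the paper's. Both proofs convert ``no weak MP/CO subgame through the distinguished strategy'' into the existence of a dominance relation inside the residual game, but they propagate the hypothesis differently. The paper fixes a player-2 strategy $h$, orders \emph{player 1's} strategies by $u_1(\cdot,h)$, uses the failure of the top strategy to dominate the bottom one to find a column $k$ where player 1's preference flips, and then sweeps over all rows to conclude that $h$ and $k$ are dominance-comparable. You fix a player-1 strategy $s$, order \emph{player 2's} strategies by $u_2(s,\cdot)$, and show the extremes $t_{\max},t_{\min}$ of that order are dominance-comparable, chaining through a single intermediate column $r$ taken from the opposite part of your partition $T_+^{s'},T_-^{s'}$. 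Your reformulation ``a $2\times2$ subgame fails to be weak MP/CO iff some player strictly dominates within it'' is exactly right and makes the generic case analysis cleaner than the paper's; the paper's extra bookkeeping is what lets it absorb non-generic games in the same pass.

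The non-generic case is where your proposal has a genuine gap, and of your two proposed fixes only the first can be made to work. The perturbation route fails at the hypotheses rather than at the descent of the subgame: ``$s$ survives iterated dominance,'' ``the game is not dominance-solvable,'' ``each player keeps two strategies in $G^\ast$,'' and above all ``$T_+^{s'}$ and $T_-^{s'}$ are both non-empty'' are not stable under small perturbations (if $u_1(s,t)=u_1(s',t)$ for all $t$, a perturbation may create a dominance between $s$ and $s'$ and change the residual game entirely), so you cannot simply run the generic argument on a nearby game and pull the answer back to $s$. The direct route does go through, but it needs one observation you have not stated: if $u_2(s,t)=u_2(s,t')$ with $t\neq t'$, then player 2 cannot strictly dominate in $\{s,s'\}\times\{t,t'\}$, so your standing assumption forces player 1 to strictly dominate there, i.e.\ $t$ and $t'$ lie in the \emph{same} part $T_+^{s'}$ or $T_-^{s'}$ for every $s'$. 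This rescues the chain when $t_{\max},t_{\min}$ are no longer unique: any intermediate $r$ chosen from the opposite part is automatically not tied with $t_{\max}$ or $t_{\min}$ against $s$, so the inequalities $u_2(s,t_{\max})>u_2(s,r)>u_2(s,t_{\min})$ stay strict; and in the degenerate case where $u_2(s,\cdot)$ is constant the same observation forces all of player 2's strategies into a single part, making $s$ and $s'$ dominance-comparable, so the contradiction comes from player 1 instead. (Ties in $u_1$ are harmless, exactly as you note: a column with $u_1(s,t)=u_1(s',t)$ already prevents player 1 from strictly dominating and hands the subgame to player 2.) Writing this out completes the proof.
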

\begin{figure}
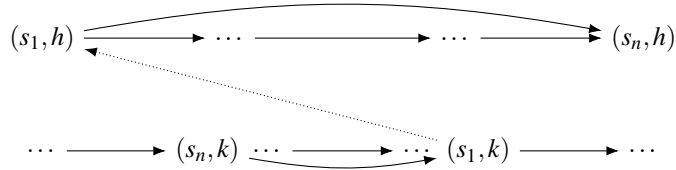

    \centering
    \includestandalone{figs/proof1}
    \caption{\emph{Sketch of Theorem~\ref{ CO and MP theorem}}: Assuming no dominated strategies, we pick a strategy $h$, and label the other players strategies in order $s_1,\dots,s_n$. By assumption, $s_n$ does not dominate $s_1$, so we can find other another strategy $k$ where $s_1$ and $s_n$ are reversed. Pick a direction for the arc from $(s_1,h)$ to $(s_1,k)$ (the dotted arc). Requiring there be no MP or CO subgames forces all remaining arcs from $(s_i,k)$ to $(s_i,h)$ to be in the same direction; we obtain a contradiction where either $h$ dominates $k$ or $k$ dominates $h$.}
    \label{fig:dominance setup}
\end{figure}
The proof is given in full the Supplementary Material. The key ideas are given in Figure~\ref{fig:dominance setup}.

A consequence of this theorem is that \emph{two-player games inherit dominance-solvability from their $2\times 2$ subgames}. That is, if every $2\times 2$ subgame has a dominated strategy, then the game is dominance-solvable.
Being dominance-solvable, games without MP or CO are somewhat trivial, and so MP and CO are responsible for bringing strategic complexity to a game. In a similar way, we established above that CO brings the problem of equilibrium selection to a game. Recall Corollary~\ref{corol: no CO}: preference-zero-sum games do not contain CO, and preference-potential games do not contain MP. This theorem immediately gives us a partial converse: generic preference-zero-sum (respectively preference-potential) games either contain MP (respectively CO), or are dominance-solvable.
\begin{corol} \label{corol: zero sum MP}
Every strategy in a non-dominance-solvable preference-zero-sum game takes part in an MP subgame. Likewise, every strategy in a non-dominance-solvable preference-potential game takes part in an CO subgame.
\end{corol}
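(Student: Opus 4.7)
The plan is to chain Theorem~\ref{ CO and MP theorem} with Corollary~\ref{corol: no CO}. Let $G$ be a non-dominance-solvable preference-zero-sum game and $s$ a strategy surviving iterated dominance. Theorem~\ref{ CO and MP theorem} supplies a $2\times 2$ subgame $H$ containing $s$ whose response graph is a weak form of Matching Pennies or of Coordination. If the response graph of $H$ is already a weak form of MP, we are finished.

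The only case needing work is when the response graph of $H$ is a weak form of CO. Here I would invoke Corollary~\ref{corol: no CO}: because $G$ is preference-zero-sum, every arc in this weak CO must be an undirected edge, so the response graph of $H$ is the fully undirected 4-cycle. The small graph-theoretic fact I would then use is that the fully undirected 4-cycle is simultaneously a weak form of MP---orienting its four edges consistently around the cycle recovers Figure~\ref{fig:p MP}---so $H$ still qualifies as an MP subgame in the weak-form sense. In the generic case no undirected edges occur at all, this branch is vacuous, and $H$ is a strict MP subgame, matching the remark that appears just before the corollary.

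The preference-potential half is proved by an entirely parallel argument: swap the roles of MP and CO throughout, and use the matching clause of Corollary~\ref{corol: no CO} to force any weak form of MP inside a preference-potential game to consist entirely of undirected edges, which is again a weak form of CO.

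There is no real obstacle---both reductions are essentially bookkeeping on top of Theorem~\ref{ CO and MP theorem} and Corollary~\ref{corol: no CO}. The one point requiring care is the convention for reading "MP subgame" and "CO subgame" in the non-generic case: adopting the weak-form reading (already implicit in Theorem~\ref{ CO and MP theorem}) is what makes the corollary come out true, and the generic case then yields literal Matching Pennies and Coordination subgames without undirected edges.
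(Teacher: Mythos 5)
Your proposal is correct and follows essentially the same route as the paper: apply Theorem~\ref{ CO and MP theorem} to get a weak MP or weak CO subgame, then use Corollary~\ref{corol: no CO} to turn any weak CO in a preference-zero-sum game into an all-undirected 4-cycle, which is itself a weak form of MP (and dually for the potential case). The paper's proof states this more tersely; your version just makes the ``undirected 4-cycle is a weak form of both'' step explicit.
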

Matching Pennies is truly the prototypical preference-zero-sum response graph---not only is it the simplest example of such, but all non-dominance-solvable preference-zero-sum games contain it. The same is true for Coordination and two-player preference-potential games. As a consequence, we find that the intersection of games which are both generic preference-zero-sum and generic preference-potential can contain neither MP nor CO, and thus must be dominance-solvable. This ties together our characterisations of preference-potential and preference-zero-sum games and connects them to the concept of iterated dominance.

\begin{corol}[Zero-Sum--Potential--Dominance Theorem] \label{dominance theorem}
Any generic game that is both preference-zero-sum and preference-potential is dominance-solvable.
\end{corol}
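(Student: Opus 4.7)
The plan is to chain together the two tools developed in the preceding section, namely Corollary~\ref{corol: no CO} and Theorem~\ref{ CO and MP theorem}, and to use genericity to upgrade the conclusions about \emph{weak forms} into conclusions about genuine MP/CO subgames. The argument will be short once we line up the ingredients, so the proposal is less about overcoming an obstacle and more about making the logical flow explicit.

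First, I would record what genericity buys us. A generic game has no undirected edges in its response graph, and hence no induced subgraph of the response graph can contain an undirected edge. Applying Corollary~\ref{corol: no CO} to a generic preference-potential game therefore yields that its response graph contains \emph{no} weak form of MP at all: any such weak form would have to consist entirely of undirected edges, but none exist. The same reasoning, applied to the preference-zero-sum hypothesis, rules out every weak form of CO in the response graph. So a game satisfying both hypotheses simultaneously has a response graph containing no weak form of MP and no weak form of CO as induced subgraphs.

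Next I would invoke Theorem~\ref{ CO and MP theorem} in its contrapositive form. That theorem asserts that if a two-player game is not dominance-solvable, then every strategy surviving iterated dominance participates in some $2\times 2$ subgame whose response graph is a weak form of MP or of CO; by Lemma~\ref{lem: subgame lemma} this subgame appears as an induced subgraph of the whole response graph. But we just ruled out both possibilities. Hence the game must be dominance-solvable, which is the claim.

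The only place that needs a moment of care is the interaction between ``weak form'' and genericity, so I would state explicitly in the proof that the absence of undirected edges in a generic game means weak MP coincides with MP (and similarly for CO), so that Corollary~\ref{corol: no CO} excludes all induced $2\times 2$ cyclic/co-cyclic subgraphs, not just the fully oriented ones. With that remark in place the argument consists of a single sentence combining the two results, and no real obstacle remains.
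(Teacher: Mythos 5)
Your proposal is correct and follows essentially the same route as the paper's own proof: apply Corollary~\ref{corol: no CO} together with genericity to rule out all weak forms of MP and CO, then conclude dominance-solvability from the contrapositive of Theorem~\ref{ CO and MP theorem}. Your explicit remark that genericity (no undirected edges) is what upgrades ``weak forms consist only of undirected edges'' to ``no weak forms exist at all'' is exactly the step the paper leaves implicit, and it is handled correctly.
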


To demonstrate this result, consider the generic $2\times 2$ games. As acyclic graphs, DD, SD and CO are preference-potential. As reflected acyclic graphs, DD, SD and MP are preference-zero-sum (Corollary~\ref{corol: preference duality}). As both preference-potential and preference-zero-sum games, DD and SD are dominance-solvable. Thus \emph{every} generic $2\times 2$ game is either preference-potential, preference-zero-sum, or dominance-solvable.

This highlights another important point: the sets of preference-zero-sum games and preference-potential games are quite broad, much more so than zero-sum or even strategically zero-sum games. In a similar result\cite{hwang2020strategic}, the authors proved that any two-player game both strategically-potential and strategically-zero-sum must have a \emph{dominant strategy} for each player. While this is an interesting result, its scope is more limited than Corollary~\ref{dominance theorem}; any game preference-equivalent to both a zero-sum and potential game is certainly strategically equivalent to both, but the converse does not hold. For instance, no game with the response graph of SD can ever be strategically-potential and strategically-zero-sum (there are no weights such that the graph and its reflection are both have the same path-weights on all paths between the same profiles) but SD is preference-zero-sum and preference-potential and so falls under the wider purview of our theorem. There are even cases\cite{li2020verification} of the explicit study of the yet-more-restricted case of games that are both zero-sum and potential, without it being noted that these games are all dominance-solvable.

\section{Applications} \label{sec:applications}

The generic $2\times 2$ games (Figure~\ref{fig: 2x2 response graphs}) have served as useful examples throughout this paper. This is particularly true of MP and CO, the two without dominated strategies, which also served as our prototypical examples of preference-zero-sum and preference-potential games. However, not all interesting properties of two-player games can be captured in just these graphs. In this section we discuss how the properties of response graph, particularly being preference-zero-sum and preference-potential, extend to larger two-player games, such as $2\times 3$, $2\times 4$ and $3\times 3$. It is our goal to build the reader's intuition about response graphs and to provide a stock of example graphs with interesting game-theoretic properties. We also intend to demonstrate how the theorems of the paper can help us to analyse games. To keep things simple, we will focus on games without dominated strategies.
Games which possess dominated strategies can be simplified into smaller games by deleting those strategies\cite{myerson1997game}. By Corollary~\ref{dominance theorem}, these graphs split into three categories: preference-zero-sum only, preference-potential only, and neither---games which are both preference-zero-sum and preference-potential always have a dominated strategy.

We begin with the generic $2\times 3$ games. There are exactly three such graphs without dominated strategies, as the following argument shows: in order for there to be no dominated strategies, the three-strategy player (we assume player 1) must prefer their strategies in opposite orders for each of the two strategies of player 2. It remains only to choose player 2's preferences for each choice of strategy for player 1, which leads to the three graphs shown in Figure~\ref{fig: 2x3}. We can distinguish these graphs via their $2\times 2$ subgames: Figure~\ref{fig:2x3 MP} contains two MP subgames and a SD subgame, so is preference-zero-sum; Figure~\ref{fig:2x3 CO} contains two CO subgames and a SD subgame, so is preference-potential, and in fact is the reflection of \ref{fig:2x3 MP}; Figure~\ref{fig:2x3 MPCO} contains MP, CO and SD subgames and so is neither preference-potential nor preference-zero-sum, and so is the \emph{unique minimal example} of such a graph. We call these `$2\times 3$ MP', `$2\times 3$ CO' and `$2\times 3$ MP-CO' by analogy with the $2\times 2$ case. All of these graphs are isomorphic to themselves under reversal, and the $2\times 3$ MP-CO game is also isomorphic to itself after reflection of either player. A similar argument also works to classify the $2\times 4$ games. In that case there are 9 distinct response graphs with no dominated strategies (Figure~\ref{fig: 2x4}): two are preference-zero-sum (\ref{fig:0001} and \ref{fig:0011}), two preference-potential (\ref{fig:1100} and \ref{fig:1000}), and five neither.

There are 156 distinct response graphs of $3\times 3$ generic games without dominated strategies, which can be found by a computer search. Of these, 25 are preference-zero-sum and 30 are preference-potential, and the remaining 101 are neither (Corollary~\ref{dominance theorem}). We will now discuss a few of these which are useful examples of particular game-theoretic properties. In generic $2\times 2$ and $2\times 3$ games, any game without an MP was acyclic and thus preference-potential. In generic $3\times 3$ this becomes no longer true, and there are exactly two graphs (Figure~\ref{fig:upper} and~\ref{fig:lower}) which do not contain a 4-cycle (an MP) and yet do contain a 6-cycle (it is easy to see that every $3\times 3$ game which has a 5-cycle must have a 4-cycle). One way to see this is by applying Theorem~\ref{ CO and MP theorem}, using the following argument: suppose our $3\times 3$ game has six profiles which take part in a cycle, and no smaller cycle. By Theorem~\ref{ CO and MP theorem}, each strategy must participate in a CO subgame, as we have assumed there are no MP subgames. One finds only two possibilities: either the three remaining nodes are all sources with arcs into the 6-cycle (Figure~\ref{fig:upper}), or they are all sinks with arcs from the 6-cycle (Figure~\ref{fig:lower}). We call these graphs the 6-cycle-source and -sink graphs respectively. By the same reasoning, the reflection of either of these graphs in either player must give a graph which has no CO subgame and yet is not preference-zero-sum. All choices of reflection are in fact isomorphic, giving a graph which we call the Reflected 6-cycle graph (Figure~\ref{fig:reflected}). This is the smallest example of a generic $3\times 3$ game without dominated strategies which does not contain CO but is also not preference-zero-sum. Its reversal is itself.

Corollary~\ref{corol: zero sum MP} and Corollary~\ref{dominance theorem} together tell us that in preference-zero-sum games without dominated strategies, every strategy must take part in a MP subgame. This leads to response graphs which are typically highly connected, like MP itself. However, in $3\times 3$ games we find the first examples of preference-zero-sum games which are \emph{not} strongly connected and yet have no dominated strategies. We call these the Inner and Outer Diamond graphs, for their shape (Figure~\ref{fig:inner} and~\ref{fig:outer}). They are reversals of each other. These two graphs are useful examples: the Inner Diamond graph is the smallest game demonstrating that generic preference-zero-sum games can have pure Nash equilibria. The Outer Diamond game is the smallest example of a generic preference-zero-sum game whose sink component is not a subgame (though it is a \emph{near-subgame}, by Theorem~\ref{one sink}).

\begin{figure}
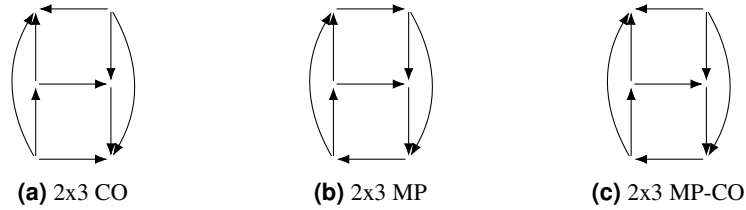

    \centering
    \begin{subfigure}{0.22\textwidth}
    \centering
    \includestandalone{figs/2x3/CO}
    \caption{2x3 CO}
    \label{fig:2x3 CO}
    \end{subfigure}
    \begin{subfigure}{0.22\textwidth}
    \centering
    \includestandalone{figs/2x3/MP}
    \caption{2x3 MP}
    \label{fig:2x3 MP}
    \end{subfigure}
    \begin{subfigure}{0.22\textwidth}
    \centering
    \includestandalone{figs/2x3/MPCO}
    \caption{2x3 MP-CO}
    \label{fig:2x3 MPCO}
    \end{subfigure}
    \caption{The three response graphs of generic non-dominated $2\times 3$ games}
    \label{fig: 2x3}
\end{figure}

\begin{figure}
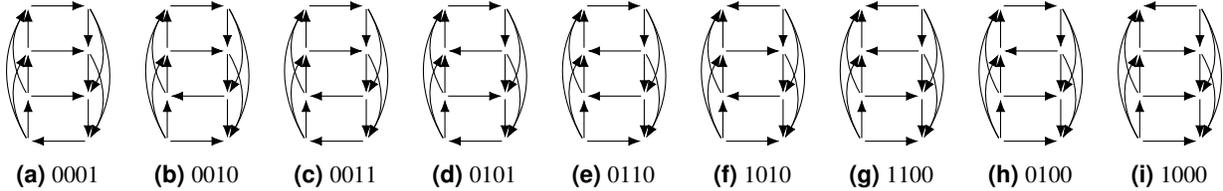

    \centering
    \begin{subfigure}{0.1\textwidth}
    \centering
    \includestandalone{figs/2x4/1-0001}
    \caption{0001}
    \label{fig:0001}
    \end{subfigure}
    \begin{subfigure}{0.1\textwidth}
    \centering
    \includestandalone{figs/2x4/2-0010}
    \caption{0010}
    \label{fig:0010}
    \end{subfigure}
    \begin{subfigure}{0.1\textwidth}
    \centering
    \includestandalone{figs/2x4/3-0011}
    \caption{0011}
    \label{fig:0011}
    \end{subfigure}
    \begin{subfigure}{0.1\textwidth}
    \centering
    \includestandalone{figs/2x4/4-0101}
    \caption{0101}
    \label{fig:0101}
    \end{subfigure}
    \begin{subfigure}{0.1\textwidth}
    \centering
    \includestandalone{figs/2x4/5-0110}
    \caption{0110}
    \label{fig:0110}
    \end{subfigure}
    \begin{subfigure}{0.1\textwidth}
    \centering
    \includestandalone{figs/2x4/6-1010}
    \caption{1010}
    \label{fig:1010}
    \end{subfigure}
    \begin{subfigure}{0.1\textwidth}
    \centering
    \includestandalone{figs/2x4/7-1100}
    \caption{1100}
    \label{fig:1100}
    \end{subfigure}
    \begin{subfigure}{0.1\textwidth}
    \centering
    \includestandalone{figs/2x4/8-0100}
    \caption{0100}
    \label{fig:0100}
    \end{subfigure}
    \begin{subfigure}{0.1\textwidth}
    \centering
    \includestandalone{figs/2x4/9-1000}
    \caption{1000}
    \label{fig:1000}
    \end{subfigure}
    \caption{The nine response graphs of generic non-dominated $2\times 4$ games. The binary encoding defines the direction of each arc for player 2. \ref{fig:0001} and \ref{fig:0011} are preference-zero-sum, and \ref{fig:1100} and \ref{fig:1000} are preference-potential.}
    \label{fig: 2x4}
\end{figure}

\begin{figure}
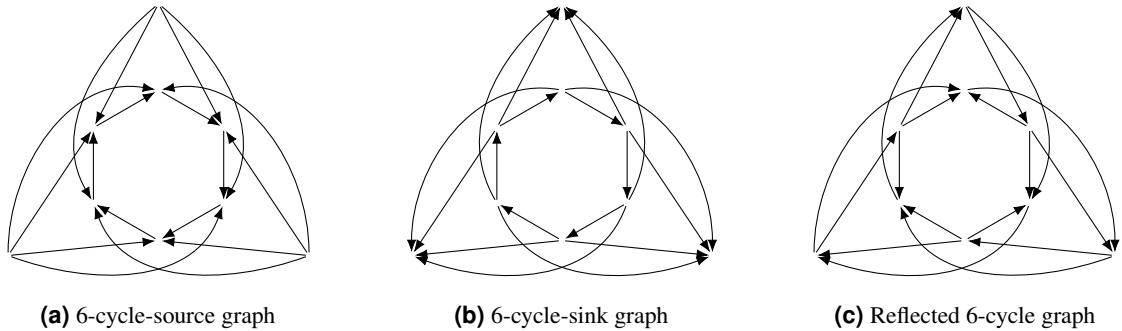

    \centering
    \begin{subfigure}{0.3\textwidth}
    \centering
    \includestandalone{figs/3x3/upper_sixloop}
    \caption{6-cycle-source graph}
    \label{fig:upper}
    \end{subfigure}
    \begin{subfigure}{0.3\textwidth}
    \centering
    \includestandalone{figs/3x3/lower_sixloop}
    \caption{6-cycle-sink graph}
    \label{fig:lower}
    \end{subfigure}
    \begin{subfigure}{0.3\textwidth}
    \centering
    \includestandalone{figs/3x3/reflected_sixloop}
    \caption{Reflected 6-cycle graph}
    \label{fig:reflected}
    \end{subfigure}
    \caption{The 6-cycle-source graph (\ref{fig:upper}) and its reversal, the 6-cycle-sink graph (\ref{fig:lower}), do not contain MP and yet are not preference-potential. The reflection of either game in either player gives the Reflected 6-cycle graph (\ref{fig:reflected}), which is the unique $3\times 3$ game which does not contain CO yet is not preference-zero-sum.}
    \label{fig: sixloop}
\end{figure}

\begin{figure}
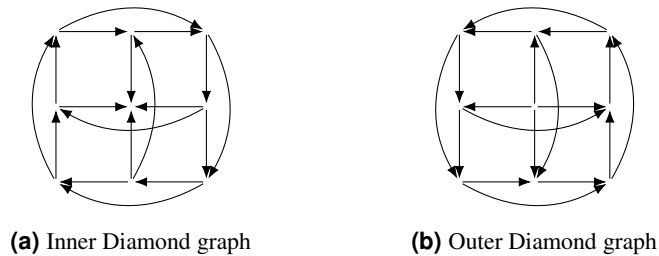

    \centering
    \begin{subfigure}{0.3\textwidth}
    \centering
    \includestandalone{figs/3x3/inner_diamond}
    \caption{Inner Diamond graph}
    \label{fig:inner}
    \end{subfigure}
    \begin{subfigure}{0.3\textwidth}
    \centering
    \includestandalone{figs/3x3/outer_diamond}
    \caption{Outer Diamond graph}
    \label{fig:outer}
    \end{subfigure}
    \caption{The Inner and Outer Diamond graphs}
    \label{fig: diamond}
\end{figure}

\section{Conclusions}

In this paper we discussed the response graphs of two-player games. The response graph is a model of game which captures only the underlying notion of strategic preference and not the cardinal values for payoffs, in other words, a model that does not concern itself with the actual payoff values and only focuses on the ordering of the payoffs. This allows response graphs to be used as a model in circumstances when access to or knowledge of cardinal payoffs is implausible. The notion of \emph{preferences} agrees with our intuitive notions about simple games such as Rock-Paper-Scissors. While many key game-theoretic concepts---such as dominance and pure Nash equilibria---depend only on which strategies are preferred, the response graph has received little direct study and few of its general properties are known. In this paper we demonstrated that the response graph contains significant mathematical structure, and its study leads to new game-theoretic insight. We showed first that two-player potential and zero-sum games, two of the best-studied classes of game, have very natural characterisations in terms of response graph structure---specifically, acyclicity. Furthermore, we established that the key equilibrium properties of these games translate to analogous properties of the sink component of the response graph. We then argued that the response graphs of the Matching Pennies and Coordination games play a key role in two-player games: any game not including these games as subgames must be dominance-solvable. In summary, we found, and strove to convey the message, that the response graph is an interesting game-theoretic object which we believe merits further study.


\bibliography{references}


\appendix

\section{Proofs}

\begin{thm}[Theorem~\ref{reconstruction}]
Given a graph $G$, we can construct a game whose response graph is $G$, or determine that no such games exist, in time linear in the number of arcs.
\end{thm}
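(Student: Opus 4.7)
The plan is to exploit the structural fact hinted at in the paper: the underlying undirected graph of any two-player response graph on $n\times m$ profiles is isomorphic to the Hamming graph $K_n\square K_m$ (the Cartesian product of two cliques), since the comparability relation fixes one coordinate and varies the other. This gives both a recognition criterion and a natural coordinate system for reading off strategies from graph nodes.

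First I would pass to the underlying undirected graph $G'$ by symmetrising all arcs of $G$. Then I would run a linear-time Cartesian product decomposition on $G'$ (as in the Imrich--Peterin style algorithms for Hamming-graph recognition) to check whether $G'\cong K_n\square K_m$ for some $n,m$, and if so to obtain the factorisation. Failure at this stage certifies that no game has response graph $G$. Success provides a labelling of each node by a pair $(s_1,s_2)\in S_1\times S_2$, uniquely determined up to renaming strategies and swapping players, and partitions the edges of $G'$ into two classes according to which coordinate changes. Each maximal clique in the first class is a ``player-1 fiber'' (all profiles agreeing on $s_2$), and symmetrically for player 2.

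Next I would verify orientation consistency fiber by fiber. For each fiber, the induced subgraph of $G$ must correspond to a total preorder on that fiber's strategies (a total order with ties represented by undirected edges), since it must be realisable by the restriction of a real-valued payoff. Concretely: contract undirected edges within the fiber and check that the resulting tournament is acyclic, e.g.\ by a topological sort. Any strict cycle in a fiber certifies that no game has response graph $G$. If every fiber passes, I would assign $u_1(s_1,s_2)$ to be the rank of $s_1$ in the fiber indexed by $s_2$, and assign $u_2$ dually. By construction the resulting response graph is exactly $G$.

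For complexity, each phase examines every arc a bounded number of times: the Cartesian product recognition is linear in $|A|$, and the per-fiber sort plus rank assignment together traverse each arc at most twice, so the whole procedure is linear in the number of arcs. The main obstacle is the first step: the linear-time recognition of Hamming-graph / Cartesian product structure is genuinely nontrivial and rests on existing graph-algorithmic machinery that must be cited rather than rederived. Every later step is a local consistency check or a simple numerical assignment, and is straightforward once the coordinate system from the factorisation is in hand.
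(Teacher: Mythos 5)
Your proposal is correct and follows essentially the same route as the paper's own proof: recognise the underlying graph as a Hamming graph ($K_n\square K_m$) via the known linear-time factorisation algorithm, check fiber-by-fiber that each orientation induces a consistent (total pre)order, and realise the payoffs by ranks within each fiber. Your explicit handling of ties by contracting undirected edges before the acyclicity check is, if anything, slightly more careful than the paper's sketch.
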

\begin{proof}
In \cite[Theorem 22.2]{hammack2011handbook} it is established that Hamming graphs, the underlying graphs of response graphs, can be recognised and given a labelling by tuples in $S_1\times S_2\times \dots\times S_N$ such that two adjacent nodes differ in a single entry of their tuples. This can be done in time $O(m)$ where $m$ is the number of edges. This labelling is unique up to the choice of sets $S_i$.

Let $G$ be a given graph. To check if $G$ is the response graph of a game, we first check if its underlying graph is a Hamming graph, using the above technique. If so, its node set is $S_1\times S_2\times \dots\times S_n$, and we assign these sets as the strategy sets for each player. It remains only to check that the graph is oriented such that, for each fixed choice of strategies for $N-1$ players, the strategy profiles for each choice of the remaining player are totally ordered. We can do this by iteration; for each player $i$, iterate through each combination of strategies for the other players and verify that the associated subgraph is directed such that it is acyclic. If not, we reject the graph. This loop examines each edge of the graph once, so is $O(m)$.

Now we show that any graph $G$ satisfying these criteria is indeed a response graph. Fix some player $i$, with $|S_i| = k$. For each fixed choice of strategy $s_{-i}$ to all players other than $i$, the associated subgraph of the response graph is a total order. We assign the payoffs $1,2,\dots,k$ to the strategies in $S_i$, in this order. The result is a game whose response graph is $G$.
\end{proof}

\begin{lem}[Lemma~\ref{lem: subgame lemma}]
If the response graph of a two-player game contains the response graph of a $2\times 2$ game, then the profiles which take part form a $2\times 2$ subgame.
\end{lem}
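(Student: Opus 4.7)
The plan is to exploit the fact that the underlying undirected graph of every one of MP, CO, SD, DD is the same 4-cycle. Thus if the induced subgraph on four profiles is isomorphic (as a directed graph) to one of these four response graphs, its underlying graph is a 4-cycle on those four profiles, with no diagonal edges. So the problem reduces to a purely combinatorial claim: any four profiles of a two-player game that induce a 4-cycle in the underlying comparability graph must be of the form $\{a_1,b_1\}\times\{a_2,b_2\}$.

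To prove that claim, I would label the four profiles $p_1,p_2,p_3,p_4$ around the 4-cycle, so that $p_i\sim p_{i+1}$ (indices mod 4) and $p_1\not\sim p_3$, $p_2\not\sim p_4$. WLOG, by relabelling strategies, take $p_1=(a_1,a_2)$ and $p_2=(b_1,a_2)$ with $a_1\neq b_1$, so $p_1$ and $p_2$ are $1$-comparable. The key step is then to observe that consecutive edges in the cycle must alternate between $1$-comparable and $2$-comparable: if $p_2$ and $p_3$ were also $1$-comparable, then $p_3=(c_1,a_2)$, whence $p_1$ and $p_3$ would share the coordinate $a_2$ and also be $1$-comparable (or equal), contradicting $p_1\not\sim p_3$. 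Hence $p_3=(b_1,c_2)$ with $c_2\neq a_2$.

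Repeating the same alternation argument for the edges $p_3 p_4$ and $p_4 p_1$, and combining the two constraints that $p_4$ is comparable to both $p_3$ and $p_1$, forces $p_4=(a_1,c_2)$; all other combinations either collapse $p_4$ into one of the existing profiles or contradict the required (non-)comparabilities. A final sanity check confirms $p_2=(b_1,a_2)$ and $p_4=(a_1,c_2)$ differ in both coordinates and hence are not comparable, so no spurious diagonal edge appears, and the induced subgraph is exactly the 4-cycle we started with. The four profiles are therefore precisely $\{a_1,b_1\}\times\{a_2,c_2\}$, i.e.\ a $2\times 2$ subgame.

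The only mildly delicate part is the case analysis used to pin down $p_4$; everything else is immediate once the alternation principle is isolated. I expect that to be a short enumeration of at most four subcases, each eliminated by either a coincidence of profiles or a violation of one of the non-comparabilities $p_1\not\sim p_3$ or $p_2\not\sim p_4$. Conceptually, the lemma reflects the Hamming-graph structure invoked in Theorem~\ref{reconstruction}: an induced 4-cycle in a Hamming graph is necessarily a ``square'' coming from a pair of coordinates in which two values each are varied.
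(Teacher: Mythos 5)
Your proposal is correct and follows essentially the same route as the paper's proof: reduce to the observation that the four profiles induce a 4-cycle in the underlying comparability graph, then use the comparability and non-comparability constraints around the cycle to force the profiles into the form $\{a_1,b_1\}\times\{a_2,c_2\}$. Your explicit ``alternation'' step and the short case analysis for $p_4$ are exactly what the paper's terser ``without loss of generality'' phrasing is hiding, so there is nothing to add.
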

\begin{proof}
The result is a special case of a general property of Hamming graphs.

Suppose $p_1,p_2,p_3,p_4$ are profiles, and the underlying graph of the subgame induced by them is a 4-cycle, with nodes in this order. Let $p_1 = (a_1,b_1)$ and $p_2 = (a_2,b_1)$ without loss of generality. Then $p_3$ is comparable to $p_2$ but not $p_1$, so again without loss of generality $p_3 = (a_2,b_2)$. Finally, $p_4$ is comparable to $p_1$ and $p_3$ but not $p_2$, and so we conclude that $p_4 = (a_1,b_2)$. Thus these profiles correspond to the subgame $\{a_1,a_2\}\times \{b_1,b_2\}$. 
\end{proof}

\begin{thm}[Theorem~\ref{strategic duality}]
A two-player game $(u_1,u_2)$ is strategically-potential if and only if the path-weight of any path between the same two nodes is identical. It is strategically-zero-sum if and only if its reflection $(u_1,-u_2)$ is strategically-potential.
\end{thm}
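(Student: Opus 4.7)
The plan is to handle the two equivalences separately, with the bulk of the work being the first (potential characterisation), and the second following as a straightforward application via reflection.

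For the first equivalence, I will first argue that a game is strategically-potential if and only if it is itself a potential game in the sense of Definition~\ref{def: potential games}. This is because strategic equivalence preserves the weighted response graph, and the potential condition $\phi(p) - \phi(q) = u_i(p) - u_i(q)$ on $i$-comparable pairs only depends on the signed arc-label data, i.e.\ on the weighted response graph. Hence it suffices to characterise potential games in terms of path-weight.

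The forward direction is a telescoping argument: if $\phi$ is a potential function and $x_1,\dots,x_n$ is a path whose consecutive pair $x_i,x_{i+1}$ is $p_i$-comparable, then $u_{p_i}(x_i) - u_{p_i}(x_{i+1}) = \phi(x_i) - \phi(x_{i+1})$, and so the path-weight telescopes to $\phi(x_1) - \phi(x_n)$, depending only on the endpoints. Conversely, given endpoint-dependence of path-weight, fix a basepoint $p_0$ and define $\phi(p)$ to be the path-weight of any path from $p_0$ to $p$. Such a path exists because the underlying graph of the response graph is a Hamming graph (by Theorem~\ref{reconstruction}), which is always connected, and $\phi$ is well-defined by hypothesis. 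Verifying the potential condition on any $i$-comparable pair $p,q$ is a one-step computation: extending a chosen $p_0$-to-$q$ path by the single step $q \to p$ adds exactly $u_i(q) - u_i(p)$ to the path-weight, so $\phi(p) - \phi(q)$ equals $u_i(p) - u_i(q)$ up to the fixed sign convention.

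For the second equivalence, the key observation is that reflection commutes with strategic equivalence: for any $2$-comparable pair $p,q$ one has $(-u_2)(p) - (-u_2)(q) = -(u_2(p) - u_2(q))$, so the weighted response graphs of $(v_1,v_2)$ and $(u_1,u_2)$ agree if and only if those of $(v_1,-v_2)$ and $(u_1,-u_2)$ do. If $(u_1,u_2)$ is strategically-zero-sum, then it is strategically-equivalent to some $(v_1,-v_1)$, whose reflection $(v_1,v_1)$ is trivially a potential game with $\phi = v_1$; by the commutation property, $(u_1,-u_2)$ is strategically-equivalent to this potential game. Conversely, if $(u_1,-u_2)$ is strategically-potential with potential function $\phi$, set $v_1 = \phi$ and $v_2 = -\phi$; then $(v_1,v_2)$ is zero-sum by construction, and the potential identity applied to $(u_1,-u_2)$ gives $v_1(p) - v_1(q) = u_1(p) - u_1(q)$ on $1$-comparable pairs and $v_2(p) - v_2(q) = -(v_1(p) - v_1(q)) = u_2(p) - u_2(q)$ on $2$-comparable pairs, so $(v_1,v_2)$ and $(u_1,u_2)$ are strategically equivalent.

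The main obstacle is the sign bookkeeping in the path-weight formula: since an arc $\arc{x_i}{x_{i+1}}$ in the directed response graph corresponds to $u_{p_i}(x_{i+1}) \geq u_{p_i}(x_i)$, the summand $u_{p_i}(x_i) - u_{p_i}(x_{i+1})$ is non-positive for directed paths, so I will need to interpret paths as sequences of pairwise comparable profiles in the underlying graph, with each term's sign recording whether the step is up or down in the preference of player $p_i$. Once this convention is fixed, the telescoping and the reconstruction of $\phi$ both go through cleanly, and the reflection argument is purely formal.
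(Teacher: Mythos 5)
Your proposal is correct and follows essentially the same route as the paper's proof: a telescoping argument for the forward direction, construction of a potential by fixing a basepoint and using path-independence for the converse, and the observation that a zero-sum game reflects to an identical-interest (hence potential) game for the duality. The only differences are cosmetic --- you make explicit the reduction from ``strategically-potential'' to ``potential'' and use an arbitrary basepoint where the paper needlessly selects a minimal element of a reachability order, and your flagged sign issue in the path-weight is resolved exactly as you suggest.
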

\begin{proof}
\emph{Claim}: A game $(u_1,u_2)$ is strategically-potential if and only if all undirected paths between any two profiles $x_1$ and $x_n$ have the same path-weight.

Suppose $(u_1,u_2)$ is potential with potential function $\phi$, and let $p_1 = x_1,x_2,\dots,x_n$ and $p_2 = x_1,y_1,\dots,y_m,x_n$ be two paths between profiles $x_1$ and $x_2$. The path-weight is
\begin{align*}
    \pathweight(p_1) &= \sum_{i=1}^{n-1} (u_{p_i}(x_i) - u_{p_i}(x_{i+1}))\\ 
    \pathweight(p_1) &= \sum_{i=1}^{n-1} (\phi(x_i) - \phi(x_{i+1})) \qquad \text{(the game is potential)}\\
    \pathweight(p_1) &= \phi(x_1) - \phi(x_n)
\end{align*}
where we have used the fact that the sum is telescoping. By identical reasoning, $\pathweight(p_2) = \phi(x_1) - \phi(x_2) = \pathweight(p_1)$.

For the converse, suppose that in $(u_1,u_2)$ the path-weight on any path between the same two nodes is equal. Define an order $v\preceq w$ if the path-weight of any path from $v$ to $w$ is non-negative. This is well-defined because all such paths have the same path-weight. This is reflexive, and we can see that it is transitive by the following. If $v\preceq w\preceq t$, then the path-weight from $v$ to $t$ is the sum of the path-weights of paths from $v$ to $w$ and $w$ to $t$ respectively, and these are each non-negative, so the path-weight from $v$ to $t$ is also non-negative.Since the underlying graph is connected, this order is \emph{total}. This order must have minimal elements as it is finite. Choose one, call it $z$. Define a potential function $\phi$ as follows. Set $\phi(z) = 0$, and for any other node $x$, define $\phi(x) = \pathweight(p_{x\to z})$, where $p_{x\to z}$ is any undirected path from $x$ to $z$.

Now we show this is indeed a potential function. Let $v$ and $w$ be $i$-comparable profiles. Choose paths $p_{v\to z}$ and $p_{z\to w}$. Since $u$ is path-independent, the path-weight along the one-step path $p = u,v$, which is $u_i(v) - u_i(w)$, must be equal to the path-weight of the concatenated path $p_{u\to z}p_{z\to v}$, and this is equal to $\phi(u) + (-\phi(v))$, so $\phi$ is a potential function. This establishes the claim.

Observe that $(u_1,u_2)$ and $(v_1,v_2)$ are strategically equivalent if and only if $(u_1,-u_2)$ and $(v_1,-v_2)$ are strategically equivalent. Suppose $(u_1,u_2)$ is zero-sum. Then the payoff in any profile $(s_i,s_j)$ is $(x_{i,j},-x_{i,j})$ for some real $x_{i,j}$. In the reflected game $(u_1,-u_2)$ the payoff is $(x_{i,j},x_{i,j})$. This game is an \emph{identical interest game}, and thus a potential game, with potential function $\phi : S_1\times S_2 \to \real$, $\phi(s_i,s_j) = x_{i,j}$. Thus if $(v_1,v_2)$ is strategically-equivalent to $(u_1,u_2)$, then $(v_1,-v_2)$ is strategically-equivalent to $(u_1,-u_2)$, so is strategically-potential. For the converse, suppose that $(u_1,u_2)$ is potential, with potential function $\phi :S_1\times S_2 \to\real$. Then the game $(\phi,-\phi)$ is clearly a zero-sum game, and its reflection $(\phi,\phi)$ is a potential game with potential $\phi$ by the above. For either player $i$ and $i$-comparable profiles $v$ and $w$, $u_i(v) - u_i(w) = \phi(v) - \phi(w)$ and so $(u_1,u_2)$ is strategically equivalent to $(\phi,\phi)$. By transitivity, any game strategically equivalent to $(u_1,u_2)$ has a reflection which is strategically-zero-sum, as it is strategically equivalent to $(\phi,-\phi)$.
\end{proof}

\begin{corol}[Corollary~\ref{corol: preference duality}]
A two-player game $(u_1,u_2)$ is preference-potential if and only if every cycle in its response graph contains only undirected edges. It is preference-zero-sum if and only if its reflection $(u_1,-u_2)$ is preference-potential.
\end{corol}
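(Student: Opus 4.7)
The plan is to prove the two equivalences separately. The first is a graph-theoretic characterisation of preference-potentialness, for which the forward direction follows from Theorem~\ref{strategic duality} and the converse requires an explicit construction of a potential function. The second is then derived by chasing the first through the reflection operation, using that reflection is a self-inverse involution swapping the classes of zero-sum and identical-interest games.

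For the forward direction of the first assertion, suppose $(u_1,u_2)$ is preference-equivalent to a potential game $(v_1,v_2)$. Any directed cycle in their shared response graph, viewed in $(v_1,v_2)$, has path-weight zero by Theorem~\ref{strategic duality}. Each arc contributes $v_{p_i}(x_i)-v_{p_i}(x_{i+1})\leq 0$ (payoffs are non-decreasing along arcs in the response graph), so the sum being zero forces every term to vanish, meaning every arc in the cycle is an undirected edge. For the converse, suppose every cycle in the response graph $G$ of $(u_1,u_2)$ consists only of undirected edges. I would generate an equivalence relation on profiles from the undirected edges; the cycle hypothesis implies no strict arc connects two profiles in a single class, because closing such an arc through the undirected edges of its class would produce a directed cycle containing a strict arc. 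Contracting classes thus gives a DAG, and defining $\phi(p)$ to be the length of the longest directed path ending at the class of $p$ in this DAG produces a function that is constant on classes and strictly increases along every strict arc. The identical-interest game $(\phi,\phi)$ then realises $G$ as its response graph and is trivially potential, so $(u_1,u_2)$ is preference-potential.

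For the second assertion, the key facts are that reflection in player 2 is an involution on games which sends each zero-sum game $(w,-w)$ to the identical-interest game $(w,w)$ and vice versa, and that every potential game is preference-equivalent to an identical-interest one (via its potential, just as in the end of the proof of Theorem~\ref{strategic duality}). Moreover, reflection commutes with any preference-equivalence respecting player labels: if $\varphi=(\varphi_1,\varphi_2)$ is an isomorphism of response graphs from $(u_1,u_2)$ to $(v_1,v_2)$, it remains an isomorphism from $(u_1,-u_2)$ to $(v_1,-v_2)$, because reflection reverses precisely the 2-comparable arcs on both sides simultaneously. Chaining these facts yields: $(u_1,u_2)$ preference-zero-sum iff preference-equivalent to some $(w,-w)$ iff $(u_1,-u_2)$ preference-equivalent to the identical-interest game $(w,w)$ iff $(u_1,-u_2)$ preference-potential.

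The main obstacle I anticipate is bookkeeping around isomorphisms that swap the roles of players, since reflection is defined asymmetrically in player 2. The resolution is that any player-swap of a zero-sum (respectively identical-interest) game is again zero-sum (respectively identical-interest), so without loss of generality the preference-equivalences in the chain above can be taken to be product isomorphisms that respect player labels, after which the commutation of reflection with the isomorphism, as described above, applies directly.
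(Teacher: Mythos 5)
Your proposal is correct and follows essentially the same route as the paper's proof: forward direction of the acyclicity characterisation via the vanishing path-weight/telescoping argument, converse via a potential built from a linear extension of the condensation DAG (your longest-path length is one concrete choice of the paper's numbers $r_i$), and the zero-sum statement by chaining reflection through Theorem~\ref{strategic duality}. Your explicit treatment of why reflection commutes with response-graph isomorphisms (reducing to product isomorphisms via the Hamming structure and absorbing player swaps into the zero-sum/identical-interest classes) is a point the paper's proof glosses over, and is a welcome addition rather than a deviation.
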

\begin{proof}
If the game is potential with potential function $\phi$, then for any cycle $x_1,\dots,x_n,x_1$ we have $\phi(x_1) \leq \phi(x_2) \leq \dots \leq \phi(x_n) \leq \phi(x_1)$, which implies $\phi(x_1) = \phi(x_2) = \dots = \phi(x_n)$. Thus the cycle contains only undirected edges. 

For the converse, we will use an approach similar to Theorem~\ref{strategic duality}, where we will construct a potential function and argue that the associated potential game has this response graph. Suppose that in $(u_1,u_2)$ every cycle contains only undirected edges. Define an order $v\preceq w$ if there is a directed path from $v$ to $w$. This is the reachability partial order of the graph. We can always assign real numbers $r_i$ to each strongly connected component $s_i$ such that $r_i < r_j$ if $s_i \prec s_j$. Choose some such values, and then define $\phi(v) = r_i$, where $v$ is in the connected component $s_i$.

To see that this is a potential function with the same response graph, let $v$ and $w$ be $i$-comparable profiles. If the arc between $v$ and $w$ is an undirected edge, then $v\preceq w$ and $w\preceq v$ so they are in the same connected component and $\phi(v) = \phi(w)$. Conversely, if $\phi(v) = \phi(w)$ then $v\preceq w$ and $w\preceq v$, and so there is a cycle from $v$ to itself containing $w$; by the above, all arcs on this cycle are undirected edges, so the arc between $v$ and $w$ is an undirected edge. If the arc is not an undirected edge, so $\arc{v}{w}$, then $v\prec w$ and by construction $\phi(v) < \phi(w)$. Conversely, if $\phi(v) < \phi(w)$ then $v$ and $w$ are in different connected components, and the component of $v$ precedes that of $w$ in the reachability order, so there is an arc $\arc{v}{w}$. Hence $\phi$ defines a potential game with the same response graph as $(u_1,u_2)$.

Now suppose $(u_1,u_2)$ is preference-zero-sum, so preference-equivalent to some zero-sum game $(z_1,z_2)$. By Theorem~\ref{strategic duality}, $(z_1,-z_2)$ is potential, and $(u_1,-u_2)$ is preference-equivalent to $(z_1,-z_2)$. For the converse, suppose $(u_1,-u_2)$ is preference-equivalent to some potential game $(w_1,w_2)$. By Theorem~\ref{strategic duality}, $(w_1,-w_2)$ is strategically equivalent to a zero-sum game $(z_1,z_2)$. By transitivity, and the fact that strategically equivalent games are preference-equivalent, $(u_1,u_2)$ is preference-equivalent to $(z_1,z_2)$.
\end{proof}

\begin{corol}[Corollary~\ref{corol: no CO}]
Every weak form of CO contained in a preference-zero-sum game is made up of only undirected edges. Likewise, every weak form of MP contained in a preference-potential game is made up of only undirected edges.
\end{corol}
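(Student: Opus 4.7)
The plan is to reduce both halves of the statement to the cycle characterization in Corollary~\ref{corol: preference duality}, handling the preference-potential case directly and then using the reflection duality to deduce the preference-zero-sum case.

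For the first half, I would let $H$ be a weak form of MP appearing as an induced subgraph of the response graph of a preference-potential game. By the definition of weak form, MP is an orientation of $H$, so every arc of MP is present in $H$ (either as a standalone directed arc that $H$ already contains, or as one half of an undirected edge of $H$ that could be oriented in the MP direction). Since MP itself is a directed $4$-cycle, these four arcs constitute a directed cycle lying in the response graph. Corollary~\ref{corol: preference duality} then forces every arc on this cycle to be part of an undirected edge. Because the underlying graph of a weak form of MP is precisely this $4$-cycle, the four MP-direction arcs exhaust the edges of $H$, and thus $H$ consists only of undirected edges.

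For the second half, let $H$ be a weak form of CO sitting inside a preference-zero-sum game $(u_1,u_2)$. By Corollary~\ref{corol: preference duality}, the reflection $(u_1,-u_2)$ is preference-potential. The key observation is that reflection carries weak forms of CO to weak forms of MP on the same four profiles: it preserves player~$1$ arcs, reverses player~$2$ directed arcs, and fixes undirected edges (since $-u_2(p)=-u_2(q)$ iff $u_2(p)=u_2(q)$). Because CO itself reflects to MP (Figure~\ref{fig:reflection of coordination}), any orientation of the undirected edges of $H$ that would yield CO corresponds, after reflection, to an orientation that yields MP. Hence the image of $H$ in the response graph of the reflected game is a weak form of MP in a preference-potential game, so the first half applies and its edges are all undirected. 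Reflection fixes undirected edges, so pulling back shows that $H$ itself has only undirected edges.

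The main obstacle is the bookkeeping around the reflection step: one has to verify carefully that reflection commutes with the notion of weak form, i.e.\ that reflecting a weak form of CO really yields a weak form of MP and not some other graph whose orientations produce a different arc configuration on the four profiles. Once this compatibility is established, the rest of the proof is a short appeal to Corollary~\ref{corol: preference duality} applied once in each case.
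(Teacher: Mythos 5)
Your proof is correct and follows essentially the same route as the paper: establish the MP half directly from the cycle characterisation of Corollary~\ref{corol: preference duality} (the four MP-direction arcs of a weak form of MP form a directed cycle, so all its edges must be undirected), and reduce the CO half to it by reflecting the game. The ``bookkeeping'' you flag in the reflection step is exactly what Lemma~\ref{lem: subgame lemma} supplies---the four profiles of an induced $4$-cycle form a $2\times 2$ subgame, so opposite edges belong to the same player and reflection reverses precisely player~2's pair of arcs while fixing undirected edges, carrying a weak form of CO to a weak form of MP as you claim.
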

\begin{proof}
If a game contains a weak form of CO then it is a subgame (Lemma~\ref{lem: subgame lemma}). Any such subgame becomes a subgame isomorphic to a weak form of MP in the reflected game. However, in a preference-potential game, any MP subgame must consist of only undirected edges, and so in a preference-zero-sum game any CO must consist of only undirected edges, by Theorem~\ref{strategic duality}.
\end{proof}

\begin{thm}[Theorem~\ref{one sink}]
If a game does not contain Coordination, then the set of sink component profiles is a near-subgame; as a consequence, the game has exactly one sink component.
\end{thm}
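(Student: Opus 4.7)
The plan is to prove the near-subgame property directly, then deduce uniqueness of the sink component as a short consequence. Throughout I will use one key observation: if a profile $x$ lies in a sink component $C$ and there is an arc $x \to y$ in the response graph, then $y \in C$ as well, since sink components have no outgoing arcs leaving them.

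Let $X$ denote the union of all sink components. To establish that $X$ is a near-subgame, fix $p_1 = (s_1, s_2)$ and $p_4 = (t_1, t_2)$ in $X$, and set $p_2 = (t_1, s_2)$, $p_3 = (s_1, t_2)$. If $s_1 = t_1$ or $s_2 = t_2$ then $\{p_2, p_3\} \subseteq \{p_1, p_4\}$ and the condition is immediate, so assume the four profiles are distinct; they then form a $2 \times 2$ subgame $\{s_1, t_1\} \times \{s_2, t_2\}$. I will show at least one of $p_2, p_3$ lies in $X$ by cases on the response graph of this subgame. If any arc incident to $p_1$ or $p_4$ is an undirected edge, the key observation immediately forces the other endpoint into the sink component of $p_1$ or $p_4$. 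Otherwise all four arcs are strictly directed, making the subgame one of DD, SD, CO, or MP; by hypothesis it is not CO. In MP each node has out-degree one inside the subgame, so $p_1$'s unique outgoing neighbor (one of $p_2, p_3$) belongs to its sink component. In DD the local source and local sink are diagonal, and a short split on whether $\{p_1, p_4\}$ is that diagonal shows that either $p_1$ or $p_4$ has an outgoing arc in the subgame into $\{p_2, p_3\}$. SD is analogous; its local source and sink are adjacent, so at most one of the diagonal pair $\{p_1, p_4\}$ is the local sink, and the other then has an outgoing arc into $\{p_2, p_3\}$.

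For uniqueness, suppose for contradiction that two distinct sink components $C_1, C_2$ exist, and pick $p_1 \in C_1$, $p_4 \in C_2$. By the near-subgame property, some swap profile, say $p_2$, lies in a sink component $C$. Since $p_1$ and $p_2$ are comparable, the response graph has an arc between them; whichever way it is oriented, the key observation gives $C = C_1$. Symmetrically, the arc between $p_2$ and $p_4$ yields $C = C_2$, whence $C_1 = C_2$, a contradiction.

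The main obstacle is the DD and SD sub-case analysis, where I must check all placements of the diagonal pair $\{p_1, p_4\}$ against the source-sink structure of the $2 \times 2$ subgame and verify that excluding the CO configuration suffices. The check is finite and elementary but fiddly; keeping the orientations straight without inadvertently reintroducing the CO pattern requires care.
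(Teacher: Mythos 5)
Your proof is correct and follows essentially the same route as the paper: excluding CO forces at least one of the two swap profiles into a sink component (the near-subgame property), and chaining the arcs through that profile merges the sink components of $p_1$ and $p_4$. The only difference is presentational --- where you enumerate the DD, SD, MP and undirected-edge cases, the paper argues by contraposition that if neither swap profile were in a sink component then all four arcs of the $2\times 2$ subgame would point into $p_1$ and $p_4$, which is precisely the CO orientation.
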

\begin{proof}
Let $(a,b)$ and $(x,y)$ be profiles contained in (possibly different) sink components. For contradiction, assume that neither $(a,y)$ nor $(x,b)$ are in sink components, and so neither $(a,b)$ nor $(x,y)$ have arcs to them. However, we find that the subgame $\{a,x\}\times \{b,y\}$ is the response graph of CO, which contradicts our assumption. Thus the sink component profiles are a near-subgame, and without loss of generality there is an arc $\arc{(a,b)}{(x,b)}$. Now we show uniqueness by demonstrating that $(a,b)$ and $(x,y)$ must be in the same sink component. Strongly connected components define an equivalence relation on nodes, so we need only show that there is a node in common. Since there is an arc $\arc{(a,b)}{(x,b)}$, $(x,b)$ is in the same sink component as $(a,b)$. If there is an arc $\arc{(x,b)}{(x,y)}$ then $(x,y)$ is also in this component; if there is an arc $\arc{(x,y)}{(x,b)}$ then $(x,y)$ is in the same sink component as $(x,b)$ and so, by transitivity, $(a,b)$. As at least one of these arcs must exist, the result is proved.
\end{proof}

\begin{corol}[Corollary~\ref{one PNE}]
A preference-zero-sum game has exactly one sink component, and if generic has at most one pure NE.
\end{corol}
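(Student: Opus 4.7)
The plan is to chain together the two prior corollaries/theorems that do all the real work, so that this final corollary becomes essentially a one-line deduction plus a short remark on pure Nash equilibria.

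First I would invoke Corollary~\ref{corol: no CO}: in a preference-zero-sum game, every weak form of CO that appears as an induced subgraph consists entirely of undirected edges. Since the Coordination graph itself (Figure~\ref{fig:p CO}) has four properly directed arcs and no undirected edges, this rules out CO as an induced subgraph. Hence a preference-zero-sum game does not contain CO, and Theorem~\ref{one sink} immediately yields that it has exactly one sink component. This gives the first half of the statement without any further work.

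For the second half, I would appeal to the definition of a pure Nash equilibrium in graph-theoretic terms: a profile $v$ is a pure Nash equilibrium iff every comparable profile $w$ has an arc $\arc{w}{v}$, which in particular means $v$ is a singleton strongly connected component that is also a sink component. In the generic case there are no undirected edges at all, so a singleton strongly connected component containing a pure Nash equilibrium really is a standalone sink node. Combined with the uniqueness of the sink component established above, there can be at most one such singleton sink component, and hence at most one pure Nash equilibrium.

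I do not expect any real obstacle here. The only mild care needed is to be explicit that ``contains CO'' in the hypothesis of Theorem~\ref{one sink} refers to CO as a directed induced subgraph, so that Corollary~\ref{corol: no CO}---which only forbids \emph{directed} CO and permits weak forms made purely of undirected edges---is exactly the input that Theorem~\ref{one sink} asks for. Once that is noted, the corollary follows in two short sentences.
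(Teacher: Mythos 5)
Your proposal is correct and follows essentially the same route as the paper: rule out CO via Corollary~\ref{corol: no CO}, apply Theorem~\ref{one sink} to get uniqueness of the sink component, and then observe that a pure Nash equilibrium in a generic game is a singleton sink component, so there can be at most one. Your added care about genericity (needed so that a pure Nash equilibrium really is a singleton \emph{sink} component rather than merely receiving all arcs) is a worthwhile clarification of a step the paper states without comment.
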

\begin{proof}
Preference-zero-sum games do not contain CO (Corollary~\ref{corol: no CO}), and so there is one sink component by Theorem~\ref{one sink}. Pure Nash equilibrium are singleton sink components. As there is exactly one sink component, this is the only pure NE.
\end{proof}

\begin{thm}[Theorem~\ref{ CO and MP theorem}]
In any non-dominance-solvable two-player game, every strategy surviving iterated dominance takes part in a subgame that is a weak form of Matching Pennies or Coordination.
\end{thm}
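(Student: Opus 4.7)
The plan is a proof by contradiction, following the sketch in Figure~\ref{fig:dominance setup}. First I would reduce to the case of a game with no dominated strategies: if $s$ survives iterated dominance, any weak MP or weak CO subgame containing $s$ in the reduced game is also one in the original, so it suffices to show that in any game with no dominated strategies (and at least two strategies per player) every strategy takes part in a weak MP or weak CO subgame.

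Fix such a game and, for contradiction, a strategy $h$ of player~2 that is in no weak MP and no weak CO subgame. Order player~1's strategies by payoff at $h$, so $u_1(s_1,h)\leq u_1(s_2,h)\leq\cdots\leq u_1(s_n,h)$. Since $s_1$ is not dominated by $s_n$, there exists a column $k$ with $u_1(s_1,k)\geq u_1(s_n,k)$. I would then analyse the $2\times n$ subgame on columns $\{h,k\}$. For each pair $i<j$ with $u_1(s_i,k)\geq u_1(s_j,k)$, player~1's preferences between $s_i$ and $s_j$ are non-strictly-aligned across the two columns, so a case check on player~2's preferences at $s_i$ and $s_j$ shows that unless player~2 \emph{strictly} prefers the same column at both rows (both strictly $h$ or both strictly $k$), the undirected edges of the $2\times 2$ subgame on $\{s_i,s_j\}\times\{h,k\}$ admit an orientation producing MP or CO---so the subgame is a weak form of one of them. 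The hypothesis on $h$ thus forces strict agreement of player~2's column preference at $s_i$ and $s_j$ for every such pair.

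Next I would propagate this local constraint globally via a connectivity argument. Define the \emph{inversion graph} on $\{1,\dots,n\}$ with an edge $\{i,j\}$ whenever $i<j$ and $u_1(s_i,k)\geq u_1(s_j,k)$. The edge $\{1,n\}$ is present by the choice of $k$. For any intermediate $\ell\in\{2,\dots,n-1\}$, either $u_1(s_\ell,k)\leq u_1(s_1,k)$, yielding the edge $\{1,\ell\}$, or $u_1(s_\ell,k)>u_1(s_1,k)\geq u_1(s_n,k)$, yielding $\{\ell,n\}$; hence every vertex is adjacent to $1$ or $n$, and the inversion graph is connected. Combined with the previous step, player~2 strictly prefers the same column at every row $s_\ell$, so either $h$ strictly dominates $k$ or $k$ strictly dominates $h$, contradicting the no-dominated-strategies assumption.

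I expect the main obstacle to be the weak-form case analysis in step two: one must verify that \emph{any} indifference by either player at a relevant arc still permits an orientation realising MP or CO, so that avoiding weak MP/CO genuinely forces strict agreement of player~2's column preferences. Once that local claim is settled, the inversion-graph connectivity is a clean combinatorial fact, but pinning down the local claim cleanly across all boundary cases of indifference (undirected edges between $s_i,s_j$ at $h$ or $k$, and at $s_i$ or $s_j$ between $h,k$) is the delicate point.
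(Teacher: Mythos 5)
Your argument is, in substance, the paper's own proof: ordering player~1's strategies by payoff against $h$, extracting a second column $k$ from the failure of $s_n$ to dominate $s_1$, and propagating player~2's strict column preference outward from the pair $\{s_1,s_n\}$ is exactly what the paper does. Your observation that the inversion graph is connected because every vertex is adjacent to $1$ or $n$ is a clean repackaging of the paper's two-stage covering argument (rows $s_i$ with $u_1(s_i,k)\leq u_1(s_1,k)$ handled against $s_1$, rows $s_j$ with $u_1(s_j,k)\geq u_1(s_n,k)$ handled against $s_n$). The local $2\times 2$ claim you isolate is correct as stated: with player~1's two arcs weakly opposed, the subgame fails to be a weak form of both MP and CO exactly when player~2 strictly prefers the same column in both rows, and all the indifference subcases you flag as delicate do go through.

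The one genuine gap is the degenerate case $u_1(s_1,h)=u_1(s_n,h)$. There, ``$s_n$ does not dominate $s_1$'' may be witnessed by the column $h$ itself, so your opening step does not produce a second column $k\neq h$ on which to run the inversion-graph argument; your inversion graph is then built on nothing. The paper splits this off as a separate first case: when all of player~1's payoffs in column $h$ are tied, every player-1 arc within column $h$ is undirected, so your local claim applies to \emph{every} pair of rows and \emph{every} other column $k$ directly (no inversion condition is needed to make the player-1 arcs weakly opposed), and since all pairs of rows overlap, player~2's strict column preference is constant across rows, again forcing $h$ to dominate or be dominated by $k$. Adding that case closes the argument; without it, the proof as written fails precisely when player~1 is indifferent across all of column $h$.
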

\begin{proof}
We assume for contradiction that all iteratively dominated strategies have been removed. As the game is not dominance-solvable, there are at least two strategies remaining for both players and no strategy dominates another. Now let $h$ be some strategy for player 2 (without loss of generality), and suppose that $h$ never takes part in a subgame isomorphic to a weak form of MP or CO. We will demonstrate a contradiction by showing that a dominated strategy must exist. Let $s_1,\dots,s_n$ be the strategies for player 1, ordered by payoff for player 1 when player 2 plays $h$, that is, $u_1(s_1,h) \leq u_1(s_2,h) \leq \dots \leq u_1(s_n,h)$. This is pictorially shown in Figure~\ref{fig:dominance setup}.

Suppose first that $u_1(s_1,h) = u_1(s_n,h)$, then $u_1(s_i,h) = u_1(s_j,h)$ for any $s_i$ and $s_j$. Pick any other strategy $k$ for player 2. If there exists $s_i$ and $s_j$ where $u_2(s_i,h) \leq u_2(s_i,k)$ and $u_2(s_j,h) \geq u_2(s_j,k)$, then $\{s_i,s_j\}\times \{h,k\}$ is a weak form of MP or CO. If there do not exist such a pair of $s_i$ and $s_j$, then we find that $u_2(s_i,h) < u_2(s_i,k)$ for every $s_i$ or $u_2(s_i,h) > u_2(s_i,k)$ for every $s_i$, implying that $h$ dominates or is dominated by $k$, contradicting our assumption.

Now suppose that $u_1(s_1,h) < u_1(s_n,h)$. No strategy dominates any other, so $s_n$ does not dominate $s_1$, and thus there is some strategy $k$ for player 2 where $u_1(s_n,k) \leq u_1(s_1,k)$. If $u_1(s_1,h) = u_1(s_1,k)$, then the subgame $\{s_1,s_n\}\times \{h,k\}$ is a weak form of MP or CO. Assuming this does not hold, there are two cases, for which the argument is symmetric while only reversing the role of $h$ and $k$ and MP and CO. In case (1), player 2 prefers $h$ when player 1 plays $s_1$, so $u_2(s_1,h) > u_2(s_1,k)$, and case (2) is the opposite, where $u_2(s_1,h) < u_2(s_1,k)$.  Case (1) is depicted in Figure~\ref{fig:dominance setup}.

Now let $s_i$ be any strategy with $u_1(s_i,k) \leq u_1(s_1,k)$. Since $s_1$ is least preferred by player 1 when player 2 plays $h$, we also have $u_1(s_1,h) \leq u_1(s_i,h)$. By case (1) we also have $u_2(s_1,k) < u(s_i,h)$. Thus we cannot also have $u_2(s_i,h) \leq u_2(s_i,k)$, as in that case the subgame $\{s_1,s_i\}\times \{h,k\}$ would be a weak form of MP. Thus for any such $s_i$, $u_2(s_i,k) < u_2(s_i,h)$. Figure~\ref{proof:2} visually summarises this argument. In particular, this implies that $u_2(s_n,k) < u_2(s_n,h)$.

Now let $s_j$ be any strategy with $u_1(s_n,k) \leq u_1(s_j,k)$. Since $s_n$ is most preferred by player 1 when player 2 plays $h$, we also have $u_1(s_j,h) \leq u_1(s_n,h)$. By the above, we also have $u_2(s_n,k) < u_2(s_n,h)$. Thus we cannot also have $u_2(s_j,h) \leq u_2(s_j,k)$, as $\{s_n,s_j\}\times \{h,k\}$ would be a weak form of CO. Thus for any such $s_j$, $u_2(s_j,k) > u_2(s_j,h)$.  Figure~\ref{proof:3} visually summarises this argument.

However, we have now discussed all strategies $s_i\in S_1$, and in each case $u_2(s_i,k) < u_2(s_i,h)$, and so we find that strategy $h$ dominates strategy $k$ for player 2, contradicting our original assumption. Case (2) follows identical reasoning, swapping CO and MP and concluding with $k$ dominating $h$.
\end{proof}

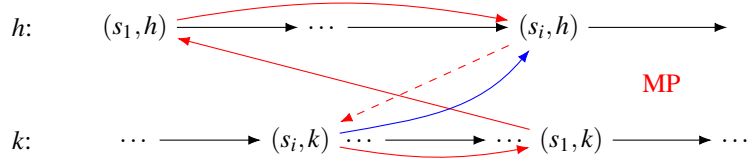
\begin{figure}
        \centering
    \begin{tikzpicture}
    
    \node (h) at (-1.5,1.5) {$h$:};
    
    \node (end) at (8,1.5) {};
    \node (s1) at (0,1.5) {$(s_1,h)$};
    \node (d1) at (2.5,1.5) {$\dots$};
    \node (sn) at (5.5,1.5) {$(s_i,h)$};
    \draw[->] (s1) to (d1);
    \draw[->] (d1) to (sn);
    \draw[->] (sn) to (end);
    \draw[->,red] (s1) to [out = 10, in = 170] (sn);
    
    \node (k) at (-1.5,0) {$k$:};
    \node (s1b) at (0,0) {$\dots$};
    \node (swb) at (2.2,0) {$(s_i,k)$};
    \node (d2b) at (3,0) {$\dots$};
    \node (d3b) at (5,0) {$\dots$};
    
    \node (sab) at (5.8,0) {$(s_1,k)$};
    
    \node (snb) at (8,0) {$\dots$};
    \draw[->] (s1b) to (swb);
    \draw[->] (d2b) to (d3b);
    \draw[->] (sab) to (snb);
    \draw[->,red] (swb) to [out = -10, in = -170] (sab);
    
    \draw[->,red] (sab) to (s1);
    
    \draw[->,red,dashed] (sn) to (swb);
    \draw[->,blue] (swb) to [in = -130, out = 10] (sn);
    \node at (7,0.75) {\color{red}{MP}};
    
 \end{tikzpicture}
 \caption{For $s_i$ where $(s_i,k)$ precedes $(s_1,k)$, the arc $\arc{(s_i,k)}{(s_i,h)}$ must go from $k$ to $h$ (blue); if instead the arc goes from $h$ to $k$ (red, dashed) the subgame $\{s_i,s_1\}\times \{h,k\}$ is MP.}
 \label{proof:2}
\end{figure}
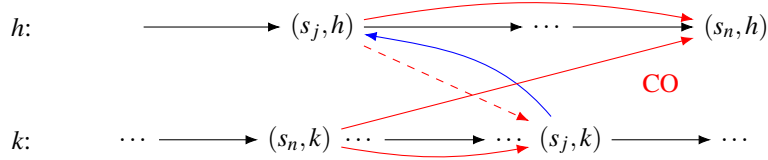
\begin{figure}
        \centering
    \begin{tikzpicture}
    
    \node (h) at (-1.5,1.5) {$h$:};

    \node (start) at (0,1.5) {};
    \node (s1) at (2.5,1.5) {$(s_j,h)$};
    \node (d2) at (5.5,1.5) {$\dots$};
    \node (sn) at (8,1.5) {$(s_n,h)$};
    \draw[->] (start) to (s1);
    \draw[->] (s1) to (d2);
    \draw[->] (d2) to (sn);
    \draw[->,red] (s1) to [out = 10, in = 170] (sn);
    
    \node (k) at (-1.5,0) {$k$:};
    \node (s1b) at (0,0) {$\dots$};
    \node (swb) at (2.2,0) {$(s_n,k)$};
    \node (d2b) at (3,0) {$\dots$};
    \node (d3b) at (5,0) {$\dots$};
    
    \node (sab) at (5.8,0) {$(s_j,k)$};
    
    \node (snb) at (8,0) {$\dots$};
    \draw[->] (s1b) to (swb);
    \draw[->] (d2b) to (d3b);
    \draw[->] (sab) to (snb);
    \draw[->,red] (swb) to [out = -10, in = -170] (sab);
    
    \draw[->,red,dashed] (s1) to (sab);
    \draw[->,blue] (sab) to [out = 130, in = -10] (s1);
    
    \draw[->,red] (swb) to (sn);
    \node at (7,0.75) {\color{red}{CO}};
    
 \end{tikzpicture}
 \caption{For $s_j$ where $(s_j,k)$ succeeds $(s_n,k)$, the arc $\arc{(s_j,k)}{s_j,h)}$ must go from $k$ to $h$ (blue); if instead the arc goes from $h$ to $k$ (red, dashed) the subgame $\{s_j,s_n\}\times \{h,k\}$ is CO.}
 \label{proof:3}
\end{figure}

\begin{corol}[Corollary~\ref{corol: zero sum MP}]
Every strategy in a non-dominance-solvable preference-zero-sum game takes part in an MP subgame. Likewise, every strategy in a non-dominance-solvable preference-potential game takes part in an CO subgame.
\end{corol}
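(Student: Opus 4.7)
The plan is to combine Theorem~\ref{ CO and MP theorem} with Corollary~\ref{corol: no CO}, which together essentially package the desired result. Since the game is non-dominance-solvable, Theorem~\ref{ CO and MP theorem} guarantees that every strategy surviving iterated dominance takes part in a $2\times 2$ subgame whose response graph is a weak form of either Matching Pennies or Coordination. The remaining task is to show that in a preference-zero-sum game the CO alternative collapses to the MP alternative, and symmetrically for the preference-potential case.

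For the preference-zero-sum case, suppose a strategy $s$ takes part in a subgame whose response graph is a weak form of CO. By Corollary~\ref{corol: no CO}, every weak form of CO contained in a preference-zero-sum game must consist entirely of undirected edges. But a 4-cycle all of whose arcs are undirected edges is simultaneously a weak form of every $2\times 2$ response graph, since we may orient its four undirected edges in the MP pattern to recover MP. Hence this subgame is equally a weak form of MP, and so $s$ takes part in an MP subgame. The preference-potential direction is obtained by interchanging the roles of MP and CO throughout, applying the second half of Corollary~\ref{corol: no CO} to any weak-form MP subgame produced by Theorem~\ref{ CO and MP theorem}.

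The main subtlety I expect is simply the weak-form reading of ``MP subgame'': the conclusion is not that the subgame has response graph literally equal to MP, but that its response graph is a weak form of MP, permitting undirected edges. Under a genericity hypothesis this distinction disappears, because the response graph then has no undirected edges and Corollary~\ref{corol: no CO} rules out CO outright in a preference-zero-sum game, so the conclusion upgrades to an honest MP subgame. Beyond this bookkeeping point I do not foresee any real obstacle; the corollary is a clean consequence of the two results it cites.
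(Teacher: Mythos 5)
Your proposal is correct and follows essentially the same route as the paper: apply Theorem~\ref{ CO and MP theorem} to get a weak-form MP or CO subgame for each surviving strategy, then use Corollary~\ref{corol: no CO} to note that in a preference-zero-sum game any weak form of CO consists only of undirected edges and is therefore also a weak form of MP (and symmetrically for the potential case). Your explicit remark that an all-undirected 4-cycle is a weak form of every $2\times 2$ response graph just makes precise the step the paper states in one line.
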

\begin{proof}
By Theorem~\ref{ CO and MP theorem}, every strategy surviving iterated dominance in a game takes part in a subgame which is a weak form of CO or MP. If the game is preference-zero-sum, then every CO subgame is also a weak form of MP, and likewise in a preference-potential game every MP subgame is also a weak form of CO, proving the result.
\end{proof}

\begin{corol}[Corollary~\ref{dominance theorem}]
Any generic game that is both preference-zero-sum and preference-potential is dominance-solvable.
\end{corol}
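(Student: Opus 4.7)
The plan is to combine Theorem~\ref{ CO and MP theorem} with the zero-sum/potential characterisations (Corollary~\ref{corol: preference duality} and Corollary~\ref{corol: no CO}) and exploit genericity to rule out weak forms of MP and CO simultaneously.

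First I would unpack what genericity buys us: by definition, the response graph of a generic game contains no undirected edges. Consequently, any weak form of a $2\times 2$ game occurring as an induced subgame must in fact be an oriented $2\times 2$ response graph, i.e. one of MP, CO, SD, or DD. In particular, a weak form of MP is exactly MP and a weak form of CO is exactly CO.

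Next I would apply Corollary~\ref{corol: no CO} in both directions. Since the game is preference-zero-sum, every weak form of CO contained in its response graph consists entirely of undirected edges; by genericity there are no undirected edges, so the response graph contains no weak form of CO at all. Symmetrically, since the game is also preference-potential, the same corollary (with the roles of MP and CO swapped via Corollary~\ref{corol: preference duality}) shows that the response graph contains no weak form of MP.

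Finally I would invoke Theorem~\ref{ CO and MP theorem} contrapositively. Suppose, for contradiction, that the game is not dominance-solvable. Then there is some strategy surviving iterated dominance, and the theorem produces a $2\times 2$ subgame containing it whose response graph is a weak form of MP or CO. This directly contradicts the conclusion of the previous step. Hence the game must be dominance-solvable. No step here is technically delicate; the main conceptual point, which I would make sure to state explicitly, is that genericity is what collapses ``weak form of MP or CO'' down to ``MP or CO'', allowing the two characterisations to be combined cleanly.
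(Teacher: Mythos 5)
Your proposal is correct and follows essentially the same route as the paper: use Corollary~\ref{corol: no CO} plus genericity to conclude the response graph contains neither CO nor MP (not even weak forms, since there are no undirected edges), then apply Theorem~\ref{ CO and MP theorem} contrapositively to conclude dominance-solvability. Your explicit remark that genericity is what collapses ``weak form of MP or CO'' to ``MP or CO'' is a useful clarification that the paper's one-line proof leaves implicit.
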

\begin{proof}
By Corollary~\ref{corol: no CO}, generic preference-zero-sum games do not contain CO, and generic preference-potential games do not contain MP, but games containing neither are dominance-solvable by Theorem~\ref{dominance theorem}.
\end{proof}

\end{document}